\documentclass[11pt]{article}
\usepackage[T1]{fontenc}
\usepackage[utf8]{inputenc}
\usepackage[margin=2.7cm]{geometry}
\usepackage{amsmath,amssymb,amsthm}
\usepackage{array}
\usepackage{booktabs}
\usepackage{float}
\usepackage{placeins}
\usepackage{graphicx}
\usepackage{natbib}
\usepackage[colorlinks=true,linkcolor=blue,citecolor=blue,urlcolor=blue]{hyperref}
\hypersetup{pdftitle={Every pooling rule has its world: matching probability
combination rules to situations and stakes},
pdfauthor={Tanel Tammet, Priit J\"arv, Dirk Draheim},
  bookmarksdepth=2
}

\newtheorem{proposition}{Proposition}

\title{Every pooling rule has its world:\\
matching probability combination rules to situations and stakes}

\author{Tanel Tammet, Priit J\"arv, Dirk Draheim\\
Tallinn University of Technology\\
\texttt{tanel.tammet@taltech.ee}
}

\date{August 2026}

\begin{document}
\maketitle

\begin{abstract}
Systems often need to combine two numerical assessments of the same yes/no
question.  The appropriate formula depends on what the numbers represent and
on how the sources are related.  Averaging is correct when one of several
alternative interpretations applies; multiplying odds is correct when
probability reports are based on conditionally independent evidence and a
common prior; and probabilities of alternative successful derivations require
their dependence or shared evidence to be taken into account.

We state the assumptions behind several common combination rules and derive
the corresponding combined probabilities.  Two groups of Monte Carlo
experiments address different questions.  First, controlled generating
mechanisms verify that the derived rule recovers the correct probability in
the situations for which its assumptions hold.  Second, the same mechanisms
measure the consequences of using a mismatched rule, using logarithmic score
and threshold decisions with different costs.  Distinct pooling rules can
produce the same binary decision at threshold $1/2$ while assigning
substantially different probabilities, so binary accuracy alone can conceal
important differences.  We also give probabilistic interpretations of
conflicting-evidence rules and show that, for overlapping derivations,
retaining the identities of shared uncertain premises permits direct
calculation of the probability that at least one derivation is available.
Pairwise combination of proof probabilities loses information when there are
three or more derivations.
\end{abstract}

\section{Introduction}

Two numerical assessments of the same yes/no question may come from ensemble members,
retrieved passages, sampled solutions, judge models, or a generator and a critic. The
correct formula for combining them depends on what the numbers represent and on how the
sources are related. If one of two alternative interpretations
applies, the combined probability is a weighted average. If two probability reports are
based on conditionally independent evidence and the same prior, their odds multiply. If
the numbers are success probabilities of two attempts, the combined probability depends
on how the attempts overlap. We state these assumptions for common combination rules and
measure the loss when a rule's assumptions do not hold.

Linear, geometric, and multiplicative pooling have long been studied in statistics and
formal epistemology \citep{stone1961,genestzidek1986,dietrichlist2016}. Forecast
aggregation also studies log-odds averaging \citep{satopaa2014,baron2014} and
performance-based weighting \citep{cooke1991}. Expert systems, probabilistic logic
programs, and evidence theory use additional rules for combining confidence values and
proofs
\citep{shortliffebuchanan1975,buchananshortliffe1984,heckerman1986,deraedt2007,fierens2015,dempster1967,shafer1976}.
Related machine-learning applications include ensembles
\citep{lakshminarayanan2017,ovadia2019}, mixtures of experts
\citep{hinton2002,jacobs1991}, retrieval marginalization \citep{lewis2020},
sampled-answer voting \citep{wang2023,xiong2024}, and model panels \citep{verga2024}.
Automated commonsense reasoning propagates confidences along proofs and cumulates
parallel derivations \citep{tammet2021}; its successor in the GK reasoner records shared
uncertain premises when combining retained proofs (system lineage \citealp{tammet2022};
current mechanism \citealp{tammet2026}).

The axiomatic literature characterizes these pools by the properties they satisfy
\citep{aczelwagner1980,mcconway1981,genest1984,dietrichlist2016}. These results
characterize formal properties of pooling rules. They do not determine which rule is
correct for a given way of producing the reports, or how costly a mismatch is. For
reports 0.3 and 0.8 the formulas considered here return values from
0.55 to 0.86. Four of them produce the same decision at threshold $1/2$. Binary accuracy
at that threshold cannot distinguish them.

\begin{sloppypar}
Studies of language-model confidence mainly compare voting, averaged confidence, and
confidence-weighted voting, evaluated by calibration and error-detection performance
\citep{xiong2024,lyu2024,rivera2024,verga2024}. They usually select the rule from
benchmark results and do not model dependence among the reports. We study how the
meaning and dependence of the reports determine the appropriate rule.
\end{sloppypar}

The paper makes four main contributions:
\begin{enumerate}
\item For common probability and confidence-combination rules, we state what
their inputs represent and derive generating conditions under which their
output is the correct conditional probability.

\item We separate two experimental questions.  Controlled Monte Carlo worlds
first test whether each rule matches the probability implied by its stated
assumptions.  The same worlds are then used to measure the cost of a mismatch
with logarithmic score and threshold decisions under different costs.

\item We show that numerically different pooling rules can nevertheless induce
the same binary decision at threshold $1/2$.  Consequently, binary accuracy at
one threshold can fail to distinguish materially different probability
estimates.  We also give explicit generative interpretations for several
rules for combining supporting and opposing evidence.

\item For derivations that share uncertain premises, we show that retaining
the identities of those premises permits direct calculation of the
proof-union probability, while reducing proofs to pairwise scalar
probabilities loses information for three or more derivations.
\end{enumerate}

The two experimental parts answer different questions.  Section~3 asks a
semantic question: given a specified mechanism that generates the reports and
the outcome, which pooling rule recovers the resulting conditional
probability?  Section~4 asks a practical question: how much does it matter if
a different rule is used?  It reuses the same generating mechanisms and
measures the resulting probability error through proper scoring and
cost-sensitive threshold decisions.  Thus the second experiment does not
introduce another notion of correctness; it measures the consequences of the
mismatches established in the first.

Section~\ref{sec:pools} defines the rules and their input types. Section~\ref{sec:corr}
derives, for each rule, conditions under which it is correct. Section~\ref{sec:betting}
measures the loss caused by mismatched rules.

Each experiment is a self-contained Python program; the complete suite
is publicly available in release \texttt{v2.0} (commit \texttt{40f2cac}) of
\url{https://github.com/tammet/poolingworlds} (Appendix~A), and the
essential code is reproduced in the appendix. The experiments use two sources, yes/no
questions, and fully specified generating mechanisms. Section~\ref{sec:summary}
discusses extensions to more than two sources and the estimation of source models and
weights.

\section{The pooling functions}
\label{sec:pools}

\subsection{Setting and scales}

Two sources report numbers $p_1$ and $p_2$ in the interval $[0,1]$ about the same yes/no
question. A pooling function $f(p_1,p_2)$ returns a combined number in $[0,1]$. Most
methods use two numerical inputs; the GK calculation of Section~\ref{sec:gk} also uses
the sets of uncertain premises behind the two proof values. The paper uses a probability
scale and a confidence scale.

On the \emph{probability scale}, the report is a probability that the answer is yes; $1/2$
means maximal uncertainty. The average, geometric, multiplicative and log-odds rules are
defined on this scale.

On the \emph{confidence scale}, used in certainty-factor systems and in confidence-annotated
logic, the report measures the strength and direction of the evidence: $1/2$ means no
information, values above $1/2$ support the claim, values below oppose it. A confidence $c$
maps to a signed strength $s=2c-1$ in $[-1,1]$. The MYCIN rule, the ProbLog conflict rule and
the naive difference are stated in terms of strengths; Dempster--Shafer combination
reads simple-support masses, a separate input type (Section~\ref{sec:types}).
When the two sources have opposite signs we write $a$ for the strength of the
supporting evidence and $b$ for the strength of the opposing evidence, both in $[0,1]$.

Two probability reports about one outcome can describe alternative cases or provide
evidence about the same case. In the first case, only one of the alternatives applies,
and the overall probability is obtained by averaging over the cases: an unobserved
variable $Z$ selects the case, report $p_z$ means $P(Y=1\mid Z=z)$, and
\[
P(Y=1)=\sum_z P(Z=z)\,P(Y=1\mid Z=z),
\]
so averaging is correct when the uncertainty is about which reading, passage, expert, or
model applies. In the second case, the correct combination depends on whether the
evidence is independent, shared, or attached to separate component events. The
applicable rule depends on the input type and the source relation.

\subsection{The types of the inputs}
\label{sec:types}

The rules use five input types:
\begin{align*}
p_i &= P(Y=1\mid E_i) &&\text{the probability of the outcome after observing evidence } E_i;\\
r_i &= P(A_i) &&\text{the probability of one event that can make the outcome occur};\\
s_i &\in[-1,1] &&\text{a signed strength of supporting or opposing evidence};\\
m_i &\in[0,1] &&\text{mass assigned to support or opposition, the rest to ignorance};\\
S_i & &&\text{the set of uncertain premises used by proof } i.
\end{align*}
The table below groups the rules by the meaning of their inputs and by the quantity
being computed. In the narrow sense, ``opinion pooling'' refers only to rules that
combine posterior probability reports; we use \emph{probability and
confidence-combination rules} as the covering term.

\begin{center}
\footnotesize
\setlength{\tabcolsep}{3pt}
\begin{tabular}{@{}l>{\raggedright\arraybackslash}p{4.4cm}l@{}}
\toprule
input & quantity to compute & rules\\
\midrule
posterior reports about one $Y$ & probability of $Y$ & average, geometric, upco, log-odds\\
probabilities of events $A$, $B$ & probability of $A\lor B$ & max, noisy-or, bounded sum, cumulation\\
strengths for and against a claim & combined evidence strength & MYCIN, ProbLog rule, naive difference\\
masses on true, false, ignorance & lower and upper probabilities; decision probability & Dempster--Shafer\\
uncertain premises behind proofs & at least one proof available & GK calculation\\
\bottomrule
\end{tabular}
\end{center}

The following rules are considered. Combined values for the input pair $(0.3,\,0.8)$ are given
in parentheses to show the spread.

\subsection{Average and weighted average (0.55)}

\emph{Input:} probabilities under alternative cases, together with the probability of
each case. \emph{Target:} the overall probability after averaging over the cases (the
marginal probability).

\begin{equation*}
\mathrm{avg}(p_1,p_2)=\frac{p_1+p_2}{2},
\qquad
\mathrm{wavg}(p_1,p_2)=w\,p_1+(1-w)\,p_2 .
\end{equation*}
The weighted average is the standard linear pool \citep{stone1961}. Its value lies in
the interval spanned by the two reports: if both reports exceed $1/2$, the average does
not exceed the larger report.\footnote{Weighted averages are characterized by marginal
dependence and unanimity preservation under the conditions of \citet{mcconway1981} and
\citet{aczelwagner1980}.}

\subsection{Maximum (0.8), probabilistic sum (0.86), and bounded sum (1.0)}
\label{sec:tconorms}

\emph{Input:} probabilities of two component events. \emph{Target:} the probability
that at least one occurs.

\begin{equation*}
\max(p_1,p_2),
\qquad
\mathrm{ps}(p_1,p_2)=p_1+p_2-p_1p_2,
\qquad
\min(1,\,p_1+p_2).
\end{equation*}
These three rules apply when the reports are the probabilities of two \emph{events} and the
question is whether at least one of them occurs. The dependence relation determines the
union probability: the maximum applies when the events are maximally positively dependent
(one contains the other), the probabilistic sum (called noisy-or in the Bayesian-network
literature) when they are independent, and the bounded sum when the union has the largest
probability compatible with the two reports (the upper Fr\'echet bound). Given only the two marginal probabilities, the union probability is not determined; it can
lie anywhere in the interval
\begin{equation*}
\max(p_1,p_2)\;\le\;P(A\ \text{or}\ B)\;\le\;\min(1,\,p_1+p_2),
\end{equation*}
the Fr\'echet bounds, studied for logical inference by \citet{hailperin1986} and
\citet{nilsson1986}. All three rules are commutative and associative.

\subsection{Multiplicative pooling, upco, product of experts (0.63)}
\label{sec:upco}

\emph{Input:} two calibrated posterior reports about one claim, computed from a common
prior. \emph{Target:} the posterior probability given both observations.

\begin{equation*}
\mathrm{upco}(p_1,p_2)=\frac{p_1p_2}{p_1p_2+(1-p_1)(1-p_2)} .
\end{equation*}
The reports are multiplied and renormalized; on the odds scale, writing $o=p/(1-p)$, the
combined odds are $o_1 o_2$. The formula is correct under three assumptions: both
reports are calibrated posterior probabilities, both use the same prior, and their
observations are conditionally independent given either outcome. With prior
$1/2$, the posterior given both observations is $\mathrm{upco}(p_1,p_2)$: likelihood
ratios of independent observations multiply. For a prior $\pi$ different from $1/2$ the correct combination
divides the doubled prior odds back out once:
\begin{equation*}
\text{oddspool}(p_1,p_2;\pi):\qquad
\text{combined odds}=o_1\, o_2\,\frac{1-\pi}{\pi}.
\end{equation*}
Using upco at a prior different from $1/2$ counts the prior twice. The same formula
occurs as multiplicative opinion pooling \citep{dietrichlist2016}, as upco (``updating
on the credences of others'') \citep{easwaran2016,pettigrewweisberg}, and as the product
of experts \citep{hinton2002}; prior-corrected variants occur in Bayesian group belief
\citep{dietrich2010} and the Bayesian committee machine \citep{tresp2000}. Under the independent-evidence
condition, two equal reports produce a result farther from $1/2$. Under other source
models the same operation can overcount the evidence.\footnote{The rule is
commutative and associative on $(0,1)^2$; at total conflict, $(0,1)$ or $(1,0)$, the
normalized product is $0/0$, and an implementation that returns $1/2$ there applies a
convention, not the rule (the same holds for the geometric pool).}

\subsection{Geometric pooling and the weighted log-odds family (0.57 for equal weights)}
\label{sec:geo}

\emph{Input:} two posterior reports about one claim. \emph{Target:} the combined
posterior in the cases described below.

\begin{equation*}
\mathrm{geo}(p_1,p_2)=\frac{(p_1p_2)^{1/2}}{(p_1p_2)^{1/2}+\bigl((1-p_1)(1-p_2)\bigr)^{1/2}},
\end{equation*}
that is, the combined odds are $(o_1o_2)^{1/2}$. More generally, converting each report to
log-odds $\ell=\ln o$ and forming a weighted sum gives the family
\begin{equation*}
\text{pooled log-odds}=w_1\ell_1+w_2\ell_2 .
\end{equation*}
Geometric pooling is the member $(1/2,1/2)$; upco is $(1,1)$. The rule is exact when
both reported log-odds have the same multiplicative distortion, when both reports repeat
the same evidence, or when two equal-sized samples are averaged
(Sections~\ref{sec:geocorr} and~\ref{sec:knob}).

Throughout the paper $\alpha$ denotes the symmetric per-source coefficient,
$w_1=w_2=\alpha$. Forecast aggregation often writes the same family as pooled log-odds
$=\gamma\,(\ell_1+\ell_2)/2$; hence $\gamma=2\alpha$, with $\gamma=1$ the geometric
pool and $\gamma=2$ upco. Multiplying the average log-odds by $\gamma>1$ (extremizing)
improved forecasting-tournament scores in the cited studies \citep{baron2014,satopaa2014}.

Section~\ref{sec:knob} evaluates the symmetric coefficient $\alpha$ under four source
models: $\alpha=1/k$ (extremization factor $\gamma=2/k$) is the exact correction when
every report exaggerates or understates its log-odds by a factor $k$, and per-source
weights $1/k_i$ handle sources with different distortion factors.

\subsection{The MYCIN certainty-factor rule (0.71 on the confidence scale)}
\label{sec:mycin}

\emph{Input:} signed evidence strengths for and against a claim. \emph{Target:} the
remaining support after the opposing evidence is taken into account.

We use the revised EMYCIN certainty-factor combination:\footnote{The revised rule is
the version used by later MYCIN-family systems, documented in
\citet[ch.~10]{buchananshortliffe1984} and \citet{vanmelle1980}. The original rule
\citep{shortliffebuchanan1975} differed in the conflict case, using plain addition
$x+y$ for opposite signs; a single moderate disconfirmation could then cancel an
arbitrarily long run of confirmations, and van Melle's EMYCIN replaced the conflict
case. Implementations that return the neutral value $1/2$ at total conflict adopt a
boundary convention outside the rule.}
\begin{align*}
\text{same sign, positive:}&\quad x+y-xy\\
\text{same sign, negative:}&\quad x+y+xy\\
\text{opposite signs:}&\quad \frac{x+y}{1-\min(|x|,|y|)} .
\end{align*}
On the confidence scale
with supporting strength $a$ and opposing strength $b$, $a>b$, the conflict case reads
$(a-b)/(1-b)$. The rule is commutative and, except at total conflict, associative;
at total conflict (strengths $1$ and $-1$; confidences $0$ and $1$) the additive
representation below diverges and the rule is undefined.
Associativity follows from an
additive representation: under the transformation
\begin{equation*}
\lambda(x)=-\ln(1-x)\quad\text{for }x\ge 0,\qquad \lambda(-x)=-\lambda(x),
\end{equation*}
the transformed values add:
$\lambda(\mathrm{combine}(x,y))=\lambda(x)+\lambda(y)$.
Section~\ref{sec:mycincorr} interprets $\lambda$ as an evidence-arrival rate: rates add
on the same side, and the conflict rule is the probability that at least one supporting
item survives item-level refutation. \citet{heckerman1986} showed that certainty
factors are consistent with probability theory when read as monotone functions of
likelihood ratios, with the parallel combination sound only for evidence independent
given the hypothesis and given its negation \citep[see also][]{heckermanshortliffe1992}.

\subsection{The ProbLog conflict rule (0.56) and the naive difference (0.50)}
\label{sec:problog}

\emph{Input:} signed evidence strengths for and against a claim. \emph{Target:} the
probability that the supporting derivation holds and the opposing one does not.

In ProbLog, independent probabilistic derivations of the same fact are combined by
their union probability, the probabilistic sum. In the support-with-veto ProbLog program
used here (four lines, reproduced in Appendix~A), the query succeeds when the supporting
derivation is present and the opposing derivation is absent; under independence, its
probability is $a(1-b)$ \citep{deraedt2007,fierens2015}. We use ``the ProbLog rule''
for this construction; different programs can represent different relations between
supporting and opposing evidence. The naive difference $a-b$, used for conflicting evidence in the CONFER
confidence framework for automated reasoning \citep{tammet2021}, subtracts the strengths
directly. Both rules agree with MYCIN and Dempster--Shafer on the agreement case
(probabilistic sum) and differ on the conflict case; neither is associative.

For confidence inputs $c_1,c_2$, we use the following piecewise form. Writing
$s_i=2c_i-1$ for the signed strengths, the combined strength is
\begin{equation*}
z=
\begin{cases}
|s_1|+|s_2|-|s_1|\,|s_2| & \text{if } s_1\ge 0 \text{ and } s_2\ge 0,\\[2pt]
-\bigl(|s_1|+|s_2|-|s_1|\,|s_2|\bigr) & \text{if } s_1\le 0 \text{ and } s_2\le 0,\\[2pt]
a(1-b) & \text{if the signs are strictly opposite,}
\end{cases}
\end{equation*}
and the combined confidence is $(1+z)/2$; in the conflict case $a$ is the magnitude of the
supporting strength and $b$ of the opposing strength, and the resulting signed strength
$a(1-b)$ is nonnegative. Confidence $1/2$ is neutral: a zero strength falls into a same-sign
branch, which returns the other input unchanged. Against an opposing input, the
supporting branch tends to confidence $1/2$ as its support tends to zero; at the exact
neutral point, the output equals the opposing input. The operator is therefore discontinuous
at the branch boundary --- the jump visible in Figure~\ref{fig:rulesconf}.

\subsection{Overlapping derivations: estimated overlap and shared premises}
\label{sec:overlap}

Two derivations of the same conclusion may use some of the same uncertain premises. With
complete overlap, the union probability is the larger derivation probability; with no
overlap and independent premises, it is the probabilistic sum. Cumulation selects an
intermediate value from an estimated degree of overlap; the GK calculation uses the
identities and probabilities of the shared premises.

\subsubsection{Cumulation: interpolation between complete overlap and independence}
\label{sec:cumul}

\emph{Input:} two derivation probabilities and an estimated degree of independence.
\emph{Target:} the probability that at least one derivation holds.

The cumulation rule of \citet{tammet2021} is
\begin{equation*}
\mathrm{cumul}(c_1,c_2;a)=\max(c_1,c_2)+a\bigl(\mathrm{ps}(c_1,c_2)-\max(c_1,c_2)\bigr),
\qquad a\in[0,1],
\end{equation*}
where $\mathrm{ps}(c_1,c_2)=c_1+c_2-c_1c_2$ is the probabilistic sum. The supplied parameter $a$ estimates
independence: $a=1$ represents independent derivations and $a=0$ complete overlap. The
formula is exact under the following mixture model. With probability $1-a$, both
derivation events are generated from the same uniform variable. With probability $a$,
they are generated independently.\footnote{Equivalently, cumulation is the union probability under a mixture
of the comonotone and independent copulas \citep{nelsen2006}.}
Section~\ref{sec:cumulcorr} builds this situation and shows that cumulation is exact for
it. When the overlap comes from specific shared premises, the estimated parameter is
generally approximate (Section~\ref{sec:cumulcorr} quantifies the error); Section~\ref{sec:gk} calculates the
overlap from retained event identities. For fixed $0<a<1$ the rule is non-associative;
for three or more derivations, an order-independent calculation requires assumptions
about their joint dependence.

\subsubsection{GK: combining proofs with shared uncertain premises}
\label{sec:gk}

\emph{Input:} the uncertain ground premises used by each retained proof, together with
their probabilities; GK calls these premises activation events. \emph{Target:} the
probability that at least one proof is available.

GK records which uncertain ground premises occur in each retained proof; shared premises
are counted once when the probability of the union of the proofs is calculated
\citep{tammet2026}. Under GK's activation model, each
uncertain ground clause instance is an independent \emph{activation
event}. A retained proof is available when all
events in its support set occur, so a proof with event set $S_i$ has availability
$c_i=\prod_{e\in S_i}p_e$. The value is the probability that the retained proof is
available under the activation model; a posterior truth probability for the conclusion
requires a separate model.

A three-event case shows the calculation. Suppose proof 1 requires the events $\{u,v\}$ and
proof 2 the events $\{v,w\}$. Then $c_1=p_up_v$, $c_2=p_vp_w$, and the shared part is the
single event $v$ with probability $s=p_v$. Both proofs are available exactly when all three
events occur, with probability $p_up_vp_w=c_1c_2/s$, so the probability that at least one
proof is available is $c_1+c_2-c_1c_2/s$.

In general, for two retained proofs of the same answer, let
$s=\prod_{e\in S_1\cap S_2}p_e$, the product over the \emph{shared} events ($s=1$ when
they share none). A shared event is counted once in the conjunction of the two proofs, so
$P(\text{both available})=c_1c_2/s$, and the proof-union probability reduces to the
closed form
\begin{equation*}
\mathrm{gk}(c_1,c_2;s)=c_1+c_2-\frac{c_1c_2}{s}.
\end{equation*}
Since each proof contains all shared events, $c_1\le s$ and $c_2\le s$. The method uses both the proof
availabilities and the retained event sets: disjoint event sets give the probabilistic
sum, containment of one set in the other gives the maximum, and for partial overlap the
shared event identities determine the probability that both proofs are available. The
model assumes that distinct activation events are independent.

\paragraph{Scope.}
The calculation combines proofs supporting the same answer; supporting and opposing
proofs are never placed into one union. When supporting and opposing proofs interact, GK
uses a separate calculation described in \citet{tammet2026}.

\paragraph{Exactness.}
The result is exact for the retained proof set under the activation model, which treats
the supplied $p_e$ as activation probabilities, when every event has a fully
instantiated identifier, the event sets can be recovered from the proof histories, and
the exact inclusion--exclusion limits are not exceeded.\footnote{GK treats separately entered
uncertain clauses as separate events. The current implementation performs exact
inclusion--exclusion within stated proof-set and event limits; see \citet{tammet2026}.} The result equals the full query probability only when the retained
proofs include all relevant minimal proofs; omitting a proof can only reduce the union
probability.

\paragraph{More than two proofs.}
The two-proof formula cannot be folded: after two proofs are reduced to one number,
the event identities needed to determine the overlap with a third proof are lost, and
Section~\ref{sec:gkcorr} gives a chain-sharing example in which every pairwise fold
gives an incorrect value.
For more than two proofs, GK applies inclusion--exclusion to all retained proofs
together, calculating the probability of each intersection from the union of the
premises used by those proofs; the result is independent of proof order. If event identities are unavailable, the overlap must again be
represented by an estimated parameter (Section~\ref{sec:cumul}).

\paragraph{Relation to established methods.}
Related probabilistic-logic and weighted-model-counting methods also evaluate unions of
overlapping proofs \citep{green2007,kimmig2017,fierens2015}. GK reconstructs the
uncertain ground premises from retained first-order proof histories and applies bounded
inclusion--exclusion without grounding the full knowledge base \citep{tammet2026}. In
GK, this calculation replaces the earlier CONFER cumulation rule of
Section~\ref{sec:cumul} \citep{tammet2021,tammet2022}. Direct inclusion--exclusion is
exponential in the number of retained proofs, so larger cases require compilation or
approximation.

\subsection{Dempster--Shafer combination for a yes/no question (bel 0.74, BetP 0.83)}
\label{sec:ds}

\emph{Input:} one source assigns mass to support and the rest to ignorance; the other
assigns mass to opposition and the rest to ignorance. \emph{Target:} lower and upper
probabilities, and a decision probability that assigns half of the remaining ignorance
to each outcome ($\mathrm{BetP}$, the pignistic probability).

Evidence theory assigns masses to sets of hypotheses. In the binary simple-support case one
source assigns mass $m_1$ to ``the claim is true'' and the rest to ignorance, the other assigns
$m_2$ to ``the claim is false'' and the rest to ignorance. Dempster's rule discards the
contradictory combination $m_1m_2$ and renormalizes \citep{dempster1967,shafer1976}.
After conditioning on non-conflict, three cases remain: only the supporting mass applies
(probability $m_1(1-m_2)$), only the opposing mass applies ($m_2(1-m_1)$), and neither
applies ($(1-m_1)(1-m_2)$, unresolved ignorance). After
renormalization by $1-m_1m_2$, $\mathrm{bel}$ is the probability of the first case,
$\mathrm{pl}$ adds the ignorance case, and $\mathrm{BetP}$ adds half of it:
\begin{align*}
\mathrm{bel}&=\frac{m_1(1-m_2)}{1-m_1m_2} &&\text{(lower probability)}\\
\mathrm{pl}&=1-\frac{m_2(1-m_1)}{1-m_1m_2} &&\text{(upper probability)}\\
\mathrm{BetP}&=\mathrm{bel}+\frac{\mathrm{pl}-\mathrm{bel}}{2}
&&\text{(pignistic probability, \citealp{smetskennes1994}).}
\end{align*}
The agreement case (both masses on the same side) is again the probabilistic sum. On the
confidence scale the conflict value $\mathrm{bel}$ equals the ProbLog value $a(1-b)$
renormalized by $1-ab$. On the same numerical strengths, the four conflict rules give,
for $(a,b)=(0.8,\,0.3)$:
\begin{equation*}
\text{naive } 0.50 \;<\; \text{ProbLog } 0.56 \;<\; \text{MYCIN } 0.71
\;\sim\; \text{DS bel } 0.74 \;<\; \text{DS BetP } 0.83 .
\end{equation*}
Dempster's normalization discards the conflicting mass, which has been criticized
\citep{zadeh1984}. In the sampling model of Section~\ref{sec:dscorr}, the normalization
is conditioning on the two opposing sources not both having decisive evidence. When
$m_1$ and $m_2$ are both large, little mass remains after conditioning; at total
conflict ($m_1m_2=1$) the rule is undefined, and the redraw in the sampling model never
terminates. Aggregation frameworks based on sets of probabilities avoid the
renormalization \citep{benavoli2010}. Section~\ref{sec:dscorr} gives a probability model under which
these quantities are exact. Smets' transferable-belief
model uses $\mathrm{BetP}$ for decisions; other belief-function decision criteria use the
lower or upper expectations \citep{denoeux2019}. World 9 of Section~\ref{sec:worlds}
compares these quantities.

\subsection{The rules side by side}
\label{sec:sidebyside}

Table~\ref{tab:semantics} collects the rules in one place: what the inputs represent,
when the rule is correct (established in Section~\ref{sec:corr}), and its main
limitation or mismatch effect. Earlier work compares pooling rules by axiomatic
properties \citep{stewartojea2018}.

\begin{table}[H]
\centering
\footnotesize
\setlength{\tabcolsep}{4pt}
\begin{tabular*}{\textwidth}{@{\extracolsep{\fill}}l
  >{\raggedright\arraybackslash}p{4.1cm}
  >{\raggedright\arraybackslash}p{3.3cm}
  >{\raggedright\arraybackslash}p{4.2cm}@{}}
\toprule
rule & what the inputs represent & correct when (Sec.~\ref{sec:corr}) & main limitation or mismatch effect\\
\midrule
average & probabilities under alternative cases & mixture over alternative cases & understates reinforcement from independent evidence\\
weighted avg. & as above, unequal weights & unequal case probabilities & error from misspecified weights\\
maximum & probabilities of two nested events & one event contains the other & omits the contribution of non-contained events\\
prob.\ sum & probabilities of two independent events & independent events & overstates positively dependent posterior reports\\
bounded sum & probabilities of two events, dependence unknown & (upper Fr\'echet bound) & upper bound\\
upco & posteriors from separate evidence, prior $1/2$ & independent evidence, shared prior & double-counts shared evidence\\
odds pool & as above, prior $\pi$ & independent evidence, any prior & needs the prior\\
geometric & posterior reports or densities & identical evidence; doubled logits; averaged samples & understates independent private evidence\\
MYCIN & signed evidence strengths & item-level refutation & models evidence survival\\
ProbLog & signed evidence strengths & independent opposition & discontinuous at $1/2$\\
cumulation & two proof probabilities and an estimated probability of independence & all-or-nothing sharing & approximate for item-level overlap\\
GK calculation & proofs with identified shared uncertain premises & proofs of the same answer, independent premises & requires ground premise identities and bounded exact calculation; premise sets needed beyond two proofs\\
DS bel/pl/BetP & mass assigned to support, opposition, and ignorance & both sources cannot be decisively correct; BetP assigns half of the remaining ignorance to each outcome & bel/pl are bounds, not point decision probabilities\\
\bottomrule
\end{tabular*}
\caption{The rules at a glance. The conditions are established in
Section~\ref{sec:corr}; the last column names the main limitation or the typical
mismatch effect, measured in Section~\ref{sec:betting}.}
\label{tab:semantics}
\end{table}

Figures~\ref{fig:rulesprob} and~\ref{fig:rulesconf} compare the outputs of the rules on
the same numerical inputs: the second input is held fixed and the first varies from 0
to 1. In
Figure~\ref{fig:rulesprob}, the maximum and probabilistic-sum curves are bounded
below by the fixed input, while the average, the geometric pool and upco all cross
$1/2$ at the same point $x=1-y$, the decision equivalence proved as
Proposition~\ref{prop:half} below. In
Figure~\ref{fig:rulesconf}, the MYCIN rule is continuous across the branch change at
$x=1/2$ and the ProbLog rule jumps there when the fixed input opposes
(Section~\ref{sec:problog}); upco is plotted in both figures for comparison.

\begin{figure}[H]
\centering
\includegraphics[width=\textwidth]{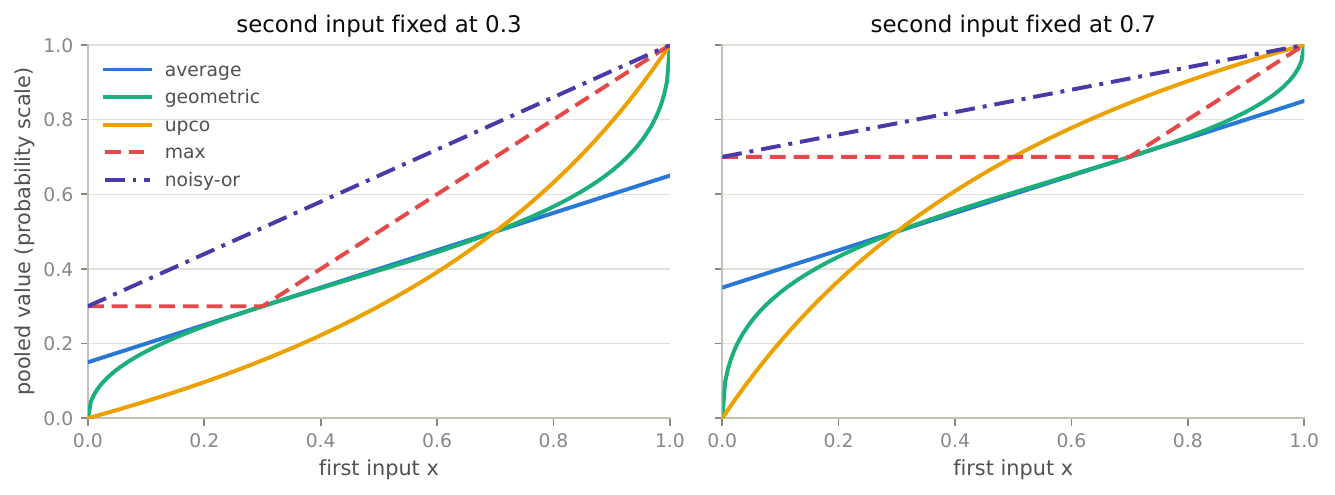}
\caption{The probability-scale pools on the same inputs: pooled value against the first input
$x$, with the second input fixed at 0.3 (left) and 0.7 (right). Dash patterns distinguish the
maximum and the probabilistic sum, which coincide with other rules on segments.}
\label{fig:rulesprob}
\end{figure}

\begin{figure}[H]
\centering
\includegraphics[width=\textwidth]{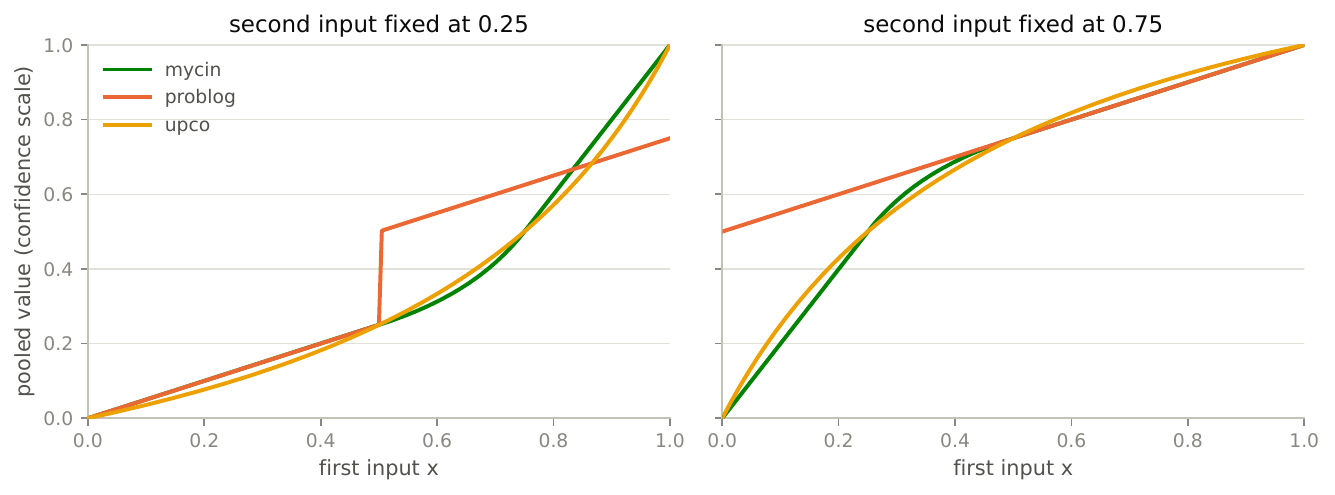}
\caption{The confidence-scale combinators (0.5 = no information) with the second input fixed
at 0.25 (left; the fixed input opposes) and 0.75 (right; the fixed input supports), with upco
for comparison. The MYCIN rule is continuous across $x=1/2$; the ProbLog rule jumps there in
the left panel, where its conflict branch takes over.}
\label{fig:rulesconf}
\end{figure}

\subsection{A decision-equivalence proposition}
\label{sec:prop}

The proposition below compares the formulas on the same two numbers. In their usual
applications, the first three combine probabilities and MYCIN combines signed evidence
strengths (Section~\ref{sec:types}).

\begin{proposition}\label{prop:half}
For all $p_1,p_2\in(0,1)$: $\mathrm{avg}(p_1,p_2)>1/2$, $\mathrm{geo}(p_1,p_2)>1/2$,
$\mathrm{upco}(p_1,p_2)>1/2$, and the signed strength output of the MYCIN combination is
positive (equivalently, its confidence output exceeds $1/2$), each exactly when $p_1+p_2>1$.
The same holds with $>$ replaced by $<$ or $=$.
\end{proposition}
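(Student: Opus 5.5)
The plan is to reduce each of the four rules to a sign statement about one quantity, and in each case show that this quantity has the same sign as $p_1+p_2-1$. For the average this is immediate: $\mathrm{avg}(p_1,p_2)-1/2=(p_1+p_2-1)/2$. For upco and the geometric pool I would use the odds form from Sections~\ref{sec:upco} and~\ref{sec:geo}. A value $q\in(0,1)$ exceeds, equals, or falls below $1/2$ exactly when its odds exceed, equal, or fall below $1$. The combined odds are $o_1o_2$ for upco and $(o_1o_2)^{1/2}$ for the geometric pool. Both are strictly monotone functions of $o_1o_2$ that equal $1$ exactly when $o_1o_2=1$. So both cases reduce to comparing $o_1o_2$ with $1$, that is, comparing $p_1p_2$ with $(1-p_1)(1-p_2)$. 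Expanding gives
\begin{equation*}
p_1p_2-(1-p_1)(1-p_2)=p_1+p_2-1,
\end{equation*}
which settles upco and geo for all three relations $>,=,<$ at once. The restriction to the open interval guarantees that the odds are finite and positive, so the $0/0$ conventions mentioned in the footnotes never arise.

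For MYCIN I would read the inputs as confidences, so that $s_i=2p_i-1\in(-1,1)$ and $p_1+p_2-1=(s_1+s_2)/2$. The claim is then that the combined signed strength has the sign of $s_1+s_2$. The cleanest route is the additive representation of Section~\ref{sec:mycin}: $\lambda(\mathrm{combine}(s_1,s_2))=\lambda(s_1)+\lambda(s_2)$. Here $\lambda$ is odd, strictly increasing, and sign-preserving on $(-1,1)$, with $\lambda(0)=0$. Hence the sign of the output equals the sign of $\lambda(s_1)+\lambda(s_2)$. Because $\lambda$ is odd and strictly increasing, this sum is positive iff $\lambda(s_1)>\lambda(-s_2)$, iff $s_1>-s_2$, iff $s_1+s_2>0$. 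The same chain gives the cases $=$ and $<$.

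The main point to check is that the additive representation genuinely holds on all of $(-1,1)^2$, including the mixed-sign branch. Since total conflict ($|s_i|=1$) is excluded, $\lambda$ is finite on the open interval. Rather than rely on the representation, I would also verify the mixed-sign case directly as a backup. There the output is $(x+y)/(1-\min(|x|,|y|))$, and the denominator is positive because $\min(|x|,|y|)<1$. The sign is therefore that of $x+y$, which is exactly the required condition. In the same-sign cases, $x+y\pm xy$ has the common sign of the inputs, matching the sign of $s_1+s_2$; when one input is zero, the output is the other input. This direct check covers every branch, so no further obstacle is expected.
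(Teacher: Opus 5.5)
Your proof is correct. For the average, upco and the geometric pool it is the paper's argument: both reduce to comparing $p_1p_2$ with $(1-p_1)(1-p_2)$, and your odds phrasing is the same comparison.

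For MYCIN your main route differs. You use the additive representation $\lambda(\mathrm{combine}(x,y))=\lambda(x)+\lambda(y)$, with $\lambda$ odd and strictly increasing. This puts MYCIN in the same form as upco and the geometric pool: a strictly increasing odd transform of the output equals a sum of transformed inputs, so the sign is settled by comparing $s_1$ with $-s_2$. The paper instead splits into cases. The same-sign branches keep the common sign, and in the conflict branch the sign is that of $(2p_1-1)-(1-2p_2)$. That is exactly your backup check.

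The $\lambda$ route is more uniform and explains why all three rules share the boundary $p_1+p_2=1$. However, it rests on an identity the paper only asserts in Section~\ref{sec:mycin}. The paper verifies it in \eqref{eq:netrate} only for the conflict case with the supporting side stronger. To lead with it, you would need the remaining mixed-sign subcases, and both do hold:
\begin{itemize}
\item opposing side stronger: $1-\frac{b-a}{1-a}=\frac{1-b}{1-a}$, which gives $\lambda(a)-\lambda(b)$ after the odd extension;
\item $a=b$: both sides vanish.
\end{itemize}
The case check is more elementary and self-contained, and since your direct branch check covers every case, the proof is complete either way.
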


\begin{proof}
For the average the claim is immediate. For upco, $p_1p_2>(1-p_1)(1-p_2)$ simplifies to
$p_1+p_2>1$. For the geometric pool the same inequality appears under square roots. For MYCIN:
if both strengths are positive or both negative the output has that sign, and both reports are
on the same side of $1/2$, so $p_1+p_2>1$ holds exactly for two positives; in the conflict case
with $p_1\ge 1/2\ge p_2$ the output sign is the sign of $(2p_1-1)-(1-2p_2)$, which is the sign
of $p_1+p_2-1$.
\end{proof}

All four formulas yield the same binary decisions at threshold $1/2$ on the same
numerical inputs. Their numerical outputs differ and can be distinguished by probability-sensitive
scores or other thresholds. These numerical comparisons show algebraic behaviour only; they do
not imply that the rules normally take the same kind of input.

\FloatBarrier
\section{Experiments I: matching rules to generating mechanisms}
\label{sec:corr}

Two terms are used throughout this section and the next. A \emph{world} is a complete
probability model for the two reports and the outcome. A rule is \emph{exact} in a world
when it returns the conditional probability of the outcome from the information supplied
to the rule. Each experiment derives this probability analytically and compares it with
the simulated frequency and the candidate formulas on the same report values. In
Section~\ref{sec:betting}, a heading such as ``exact rule: the average'' means that the
average is the conditional probability in that setting; individual outcomes remain
random.

\subsection{Average: two possible readings of an ambiguous query}
\label{sec:avgcorr}

Let $Z\in\{1,2\}$ be the intended reading of an ambiguous query, with
$P(Z=1)=P(Z=2)=1/2$; $Z$ is not observed. The two numbers are conditional reports, one per
reading: the claim is true with probability $0.30$ under reading 1 and $0.80$ under
reading 2 (a question about ``Cambridge'' may concern Cambridge, England or Cambridge,
Massachusetts). Each report is calibrated conditional on its reading, and neither source
knows the round's actual reading. The probability that the claim is true is $0.5\cdot 0.30+0.5\cdot 0.80=0.55$, the
average.

The situation generalizes: when the two reports describe two mutually exclusive circumstances
and circumstance $i$ holds with probability $w_i$, the correct combination is the weighted
average, the weights being the probabilities of the reporters' circumstances. The same
calculation applies to RAG marginalization when the retrieval weights are posterior
probabilities of the relevant passage and the answer probabilities are conditional on
that passage \citep{lewis2020}; plain averaging of retrieved passages need not satisfy these
conditions. Other examples are gated mixtures of experts \citep{jacobs1991}, Bayesian
model averaging \citep{hoeting1999,wilsonizmailov2020}, and ensemble averaging with the
member index as the unobserved alternative \citep{lakshminarayanan2017,ovadia2019}.
In the simulation, observed frequencies match the average for all five report pairs; the
other candidate formulas differ from the observed frequency (the probabilistic sum, for
example, gives 0.86 where the exact value is 0.55).

When the reports are independent evidence about the same case, the average underestimates
the posterior probability (Section~\ref{sec:upcocorr}).

\subsection*{Method notes}

The reported frequencies use $10^5$ to $4\cdot 10^5$ rounds per configuration, giving
binomial standard errors below $0.0016$; the observed frequencies match the stated
formulas within a few standard errors. The formulas are derived analytically; the
simulations check their implementation and measure the errors of mismatched rules. The
derived conditional probability is the reference used in Section~\ref{sec:betting}. Each
mechanism defines the inputs independently of the formula under study and derives the
outcome probability from the source relation. The code of each experiment is one short
file in the public repository (Appendix~A).

\subsection{Maximum and probabilistic sum: two attempts at the same task}
\label{sec:trapscorr}

Consider the event that at least one of two generated solutions passes the task's test
suite. The system samples two candidate programs for the same
task; the two numbers are the \emph{per-attempt} success probabilities, 0.30 and 0.80 in the
example. Pass@$k$ evaluation of code generation reports the probability that at least one attempt
passes \citep{chen2021}. The two per-attempt success probabilities are the inputs here;
their dependence determines the union probability.

The nested and independent cases below are two reference cases. Positive dependence
generally gives a value between the maximum and the probabilistic sum; negative
dependence can give a larger union probability.
If the attempts come from two unrelated models whose failures are independent, the event fails
only if both attempts fail:
$1-0.70\cdot 0.20=0.86$, the probabilistic sum (the independent-attempts pass@2).

In the nested model, each task has a sampled difficulty, and attempt $i$ succeeds when
the difficulty is below $p_i$. Every task solved by the weaker attempt is also solved by
the stronger one, so ``at least one passes'' is the same event as ``the stronger attempt
passes'', with probability $\max(0.30,0.80)=0.80$. In the simulation the coupled case
uses one shared uniform variable.

The upper Fr\'echet end $\min(1,p_1+p_2)$ of Section~\ref{sec:tconorms} is reached
under a maximally negatively dependent construction (the simulation uses one shared
uniform variable with opposite thresholds). Applications include the leaky noisy-or that aggregates per-finding
probabilities into a case-level probability \citep{liao2019}, and pass@$k$-style reasoning,
where additional samples help less when the attempts are correlated
(Section~\ref{sec:cumulcorr}).

\subsection{The ProbLog conflict rule: separate supporting and opposing searches}
\label{sec:problogcorr}

On the confidence scale, a source with confidence $c$ contributes evidence of strength
$s=2c-1$; the strength is read as the probability that the corresponding search returns
evidence with that sign. Two independent searches examine a generated claim: one returns a
supporting passage with probability $a$, the other a contradicting passage with
probability $b$. This models fact verification with separate
searches for supporting and contradicting evidence \citep{thorne2018}.

The agreement case (two supporting searches) counts the rounds where at least one search
returns a passage: $a+b-ab$, the probabilistic sum. The conflict case counts the rounds
where the supporting search returns a passage and the contradicting search does not:
$a(1-b)$, which is the ProbLog value. Both
counts match the simulation exactly. The correspondence presumes independence of the two
searches.

\subsection{Multiplicative pooling: two reports based on independent evidence}
\label{sec:upcocorr}

Each round has $P(Y=1)=1/2$. Model 1 observes the image attached to the case and model 2
the accompanying text; the two observations are conditionally independent given the truth
value, under both values of $Y$, and both models compute their posteriors from the same
prior $1/2$.
Each model receives a signal of known reliability (drawn between 0.55 and 0.95) and reports
the calibrated posterior probability for that signal.

Under these conditions the posterior probability given both observations is exactly
$\mathrm{upco}(p_1,p_2)$: on the odds scale, each report contributes its likelihood ratio, and
likelihood ratios of independent observations multiply. The simulated frequencies match
the posterior for every signal pattern. For $p_1=p_2=0.8$ the exact posterior is $0.94$; the average is
$0.80$. This is the product of
experts \citep{hinton2002} in its two-source form, and the naive-Bayes fusion of two
conditionally independent modalities.

With prior $0.3$ in the simulation, each calibrated report already contains the same
prior odds; multiplying the reports counts them twice, and upco is systematically too
high. The odds pool with the prior correction of Section~\ref{sec:upco} restores the
exact posterior. A shared base rate is one example of a common prior; the correction
counts it once per combination. It removes duplicate prior odds only; dependence caused
by shared data, architecture, or correlated errors requires a separate model.

\subsection{Geometric pooling: doubled log-odds and averaged samples}
\label{sec:geocorr}

We consider two exact cases: uniformly doubled binary log-odds, and posterior
distributions from averaged samples.

First, for single binary reports, geometric pooling is exact when both sources report log-odds
twice as large as their evidence justifies. In the two-models situation of
Section~\ref{sec:upcocorr}, let each model report double its justified log-odds (a model
whose observation justifies odds 4:1 reports odds 16:1). This
is a form of overconfidence addressed by temperature scaling, which divides the network's
log-odds by a fitted constant \citep{guo2017}. Averaging the two
reports on the log-odds scale halves each doubled logit and adds them, recovering exactly
the calibrated combined answer. Geometric pooling
is therefore the exact rule for sources that double their log-odds; averaging member
networks' logits, a standard ensemble variant, is equivalent to geometric pooling for
binary predictions.\footnote{For multiclass softmax outputs the analogue is the
normalized geometric mean of the class probabilities.} Section~\ref{sec:knob} develops
the general version.

Second, for posterior distributions, geometric pooling gives the posterior
corresponding to the average of two equal-sized samples. Following
\citet{pettigrewweisberggeo}: a model fails on tasks of a given kind at an unknown rate;
two evaluation runs each observe a sample (12 failures out of 20 tasks, and 18 out of 20
in the second run) and yield posterior distributions over the failure rate from a shared
flat prior. The geometric pool of the two distributions equals the posterior
distribution of the averaged sample,
15 of 20. Multiplying the distributions gives the posterior of the added samples (30 of 40), exact
only when the two task sets are disjoint \citep{scott2016,neiswanger2014}; under unknown
overlap the averaged-sample form avoids double counting \citep{julieruhlmann1997}. The identity is algebraic and
holds for every data pair. The linear
pool of the two distributions is a mixture of them and equals the averaged-sample
posterior only when the two samples coincide.

\subsection{The MYCIN rule: Poisson evidence and item-level refutation}
\label{sec:mycincorr}

The agreement case of the MYCIN rule is the probabilistic sum and matches the
independent-signs count of Section~\ref{sec:problogcorr}. The conflict case has an exact
reformulation. Define $\lambda(x)=-\ln(1-x)$, so that
$x=1-e^{-\lambda}$ is the probability that at least one event of a Poisson stream of rate
$\lambda$ occurs in a unit period. Then
\begin{equation}
\frac{a-b}{1-b}\;=\;1-e^{-(\lambda_a-\lambda_b)},
\label{eq:netrate}
\end{equation}
which is verified by substituting $e^{-\lambda_a}=1-a$ and $e^{-\lambda_b}=1-b$: the
MYCIN conflict value is the probabilistic-sum value of the net rate
$\lambda_a-\lambda_b$. Each evidence strength maps to a Poisson rate, same-sign
combination adds the rates, and the conflict formula subtracts the refutation rate; here
$\lambda$ is the expected number of evidence arrivals in one unit interval.

Independent support and opposition instead give $a(1-b)$, the ProbLog value. The
following coupled process has the MYCIN conflict value as its outcome probability.

\begin{proposition}\label{prop:refutation}
Let $0<b<a<1$ and set $\lambda_a=-\ln(1-a)$, $\lambda_b=-\ln(1-b)$. Let supporting items
arrive as a Poisson process of rate $\lambda_a$, and let each supporting item, independently
of everything else, be refuted with probability $\lambda_b/\lambda_a$. Then (i) the probability
that at least one supporting item arrives is $a$; (ii) the refuted items form a Poisson
process of rate $\lambda_b$, so the probability that at least one refutation occurs is $b$;
and (iii) the probability that at least one supporting item survives unrefuted is
\begin{equation*}
P(\text{at least one supporting item survives})
=1-e^{-(\lambda_a-\lambda_b)}=\frac{a-b}{1-b}\,,
\end{equation*}
the MYCIN conflict value.
\end{proposition}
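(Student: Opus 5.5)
The plan is to use Poisson thinning together with the void probability of a Poisson process on the unit interval. Two preliminary checks come first. Because $0<b<a<1$ and $\lambda(x)=-\ln(1-x)$ is strictly increasing on $[0,1)$, we have $0<\lambda_b<\lambda_a<\infty$. Hence the refutation probability $q=\lambda_b/\lambda_a$ lies strictly in $(0,1)$, and the construction is well defined.

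For (i), I would use the fact that the number of supporting arrivals $N$ in the unit period is Poisson with mean $\lambda_a$. Then
\[
P(N\ge 1)=1-e^{-\lambda_a}=1-(1-a)=a .
\]

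For (ii) and (iii), the key tool is the thinning (marking) theorem. If each point of a Poisson process of rate $\lambda_a$ is independently marked refuted with probability $q$, then:
\begin{itemize}
\item the refuted points form a Poisson process of rate $q\lambda_a=\lambda_b$;
\item the unrefuted points form a Poisson process of rate $(1-q)\lambda_a=\lambda_a-\lambda_b$;
\item the two processes are independent.
\end{itemize}

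I would prove this at the level needed, namely for counts on the unit interval. Conditional on $N=n$, the refuted count $R$ is $\mathrm{Bin}(n,q)$. Summing the joint probability
\[
P(R=r,\;N-R=k)=e^{-\lambda_a}\frac{\lambda_a^{r+k}}{(r+k)!}\binom{r+k}{r}q^r(1-q)^k
\]
factors it into a product of two Poisson probability mass functions, with means $q\lambda_a$ and $(1-q)\lambda_a$. Alternatively, I would cite the standard thinning theorem directly.

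Given thinning, (ii) follows as in (i):
\[
P(R\ge1)=1-e^{-\lambda_b}=b .
\]
Likewise, for (iii),
\[
P(N-R\ge 1)=1-e^{-(\lambda_a-\lambda_b)} .
\]
The final step rewrites this value using $e^{-\lambda_a}=1-a$ and $e^{-\lambda_b}=1-b$:
\[
1-\frac{1-a}{1-b}=\frac{a-b}{1-b}.
\]
This is exactly identity~\eqref{eq:netrate}, and it equals the MYCIN conflict value from Section~\ref{sec:mycin}.

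The main obstacle is justifying the thinning step, specifically the factorization of the joint distribution of refuted and unrefuted counts. Everything else is substitution. I would present the binomial-sum factorization explicitly in one line so that the argument is self-contained.

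I would also add a brief remark that independence of the refuted and surviving streams is not needed for the stated probabilities. What matters is only the marginal law of each stream. Even so, independence explains why the resulting probabilities differ from the independent-opposition ProbLog value $a(1-b)$: here the refutations are carved out of the supporting stream rather than generated separately.
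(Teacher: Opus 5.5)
Your proposal is correct and follows the paper's proof: (i) comes from $1-e^{-\lambda_a}=a$, the Poisson thinning theorem (which the paper cites from Kingman) gives refuted and surviving streams of rates $\lambda_b$ and $\lambda_a-\lambda_b$, and the substitution $e^{-\lambda_a}=1-a$, $e^{-\lambda_b}=1-b$ finishes (ii) and (iii). Your explicit count-level factorization only makes the thinning step self-contained; note that it factors a single term directly, so no summation is involved.
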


\begin{proof}
(i) is the definition of $\lambda_a$. By the thinning property of Poisson processes
\citep[ch.~5]{kingman1993}, the refuted and the surviving items form independent Poisson
processes with rates $\lambda_a\cdot(\lambda_b/\lambda_a)=\lambda_b$ and
$\lambda_a-\lambda_b$; this gives (ii), and (iii) follows from
$1-e^{-(\lambda_a-\lambda_b)}=1-(1-a)/(1-b)=(a-b)/(1-b)$.
\end{proof}

Under this construction the MYCIN conflict value is the probability that at least one
supporting item survives refutation, with both marginal reports calibrated.
The construction can represent citation checking in which each criticism targets a
specific supporting citation \citep{irving2018} and the answer is accepted when at least
one citation remains unrefuted. The construction assumes that every refutation targets a supporting item;
attacks do not arrive independently. Independent counter-evidence gives the ProbLog world of
Section~\ref{sec:problogcorr} instead. The simulated survival frequency matches the
formula, and both marginal reports are calibrated as stated.

Independent opposition yields the ProbLog value; item-level refutation yields the MYCIN
value. The construction models survival of the
supporting evidence; a truth probability for the claim itself requires a separate model,
and for calibrated models reporting on the fact the applicable combination remains the
odds pool of Section~\ref{sec:upcocorr}.

\subsection{Cumulation: two reasoning chains with unobserved overlap}
\label{sec:cumulcorr}

Two reasoning chains support the same answer, and they share evidence: chain 1 uses a
retrieved passage and a code check, chain 2 uses the same code check and a knowledge-base
lookup. Each item of evidence is valid with a known probability; a chain holds if all its
items are valid; the answer holds if at least one chain does. If the chains shared no items,
the exact combination of the two chain confidences would be the probabilistic sum; if they
were copies of each other, the maximum. Under partial sharing the union probability lies
between these endpoints, and the parameter $a$ of cumulation selects a point on this
interval.

We evaluate two sharing models. In the first, sharing is all or nothing: with probability
$1-a$ both chains use the same uniform variable, and with probability $a$ they are
independent, as resampled generations are when they reuse the same retrieved passage or
fall back on the same reasoning plan. Here the probability that at least one chain
holds is exactly $\mathrm{cumul}(c_1,c_2,a)$, for every $a$, because the formula is
linear in $a$. In the second, sharing is item-level: each chain has $m=4$ evidence items
and the chains hold $k$ of them in common. The endpoints again match ($k=0$ gives the
probabilistic sum, $k=4$ the maximum), but between them cumulation with the parameter
$a=1-k/(2m-k)$ proposed by \citet{tammet2021} overestimates the simulated probability by
up to 0.02. The rule is exact for all-or-nothing sharing and an approximation for
item-level sharing. The same distinction matters in self-consistency decoding
\citep{wang2023}: correlation reduces the gain from additional chains. Section~\ref{sec:gkcorr} computes the overlap from explicit shared-event
identities. When event identities are unavailable, cumulation provides an
approximation.

\subsection{The GK calculation: two chains with identified shared premises}
\label{sec:gkcorr}

Use the item-sharing model of Section~\ref{sec:cumulcorr}, with the shared item
identities observed (chain 1 uses a retrieved passage and a code check, chain 2 the same
code check and a knowledge-base lookup). Treat each evidence item as an uncertain ground
premise (a GK activation event) and each chain as a retained proof. The shared part has
probability $s$, the product of the shared
items' validities, and the probability that at least one chain holds is
$\mathrm{gk}(c_1,c_2;s)$ of Section~\ref{sec:gk}, with the shared-event dependence computed from the item identities.

Three simulations compare the GK formula with cumulation.

\paragraph{Equal item probabilities.}
In the item-level grid of Section~\ref{sec:cumulcorr} ($m=4$ items per chain, each valid
with probability $0.85$, $k$ shared), the GK value matches the simulated frequency for
\emph{every} $k$ (largest absolute deviation $0.0007$, within the Monte Carlo sampling
error), where cumulation with the count-based parameter has an error of up to $0.021$.

\paragraph{Unequal item probabilities.}
For chains $\{0.95\}\cup\{0.9,0.75\}$ and $\{0.8,0.85\}\cup\{0.9,0.75\}$ the simulated
frequency is $0.6642$, the GK value $0.6642$, and cumulation with the count parameter
$0.7400$, an overestimate of $0.0758$. The count parameter uses only the number of shared
items. The product $s$ also uses their probabilities; in this case the shared items have
high probabilities.

\paragraph{Three proofs.}
Three chains share in a chain pattern: $A$ shares one item with $B$, and $B$ another with
$C$; $A$ and $C$ are disjoint (every item at $0.85$, seven distinct items in all). The
exact union probability is $0.8984$. Inclusion--exclusion over the three availability
events, with each intersection evaluated from the union of the corresponding item sets,
returns this value. Folding the two-argument formula through the pairwise overlaps gives
$0.8318$, and ignoring the second overlap gives $0.9169$. Both procedures discard the
event identities needed for the third proof.

In self-consistency decoding \citep{wang2023}, the same calculation applies when sampled
chains reuse identified passages or subresults and per-item validity probabilities are
available. The premise identities determine the overlap; the premise probabilities still
require calibration. These simulations and world 10 of Section~\ref{sec:worlds} evaluate
the GK formula independently of GK's proof search and opposition handling.

\subsection{Dempster--Shafer: opposing sources conditioned on non-conflict}
\label{sec:dscorr}

Model 1 supports the claim and model 2 opposes it. Let $G_i$ indicate that model $i$ has
decisive evidence, with $P(G_i=1)=m_i$ independently; a model with decisive evidence is
correct, and otherwise it is uninformative. Since the two statements contradict, both
models cannot have decisive evidence, and the construction conditions on $G_1G_2=0$:
draws with both decisive are discarded and redrawn. The redraw is rejection sampling that implements
Dempster's normalization: conditioning on non-conflict.

Conditioning on non-conflict leaves three cases: only model 1 decisive (the claim is
true), only model 2 decisive (the claim is false), and neither decisive. The frequency of
the first case is $\mathrm{bel}$; adding the neither-decisive case gives $\mathrm{pl}$;
resolving the neither-decisive case by a fair coin gives $\mathrm{BetP}$. All three
simulated frequencies agree with the formulas within Monte Carlo sampling error, for
example at $(m_1,m_2)=(0.8,\,0.3)$: observed $0.7359\,/\,0.9205\,/\,0.8286$
against $\mathrm{bel}=0.7368$, $\mathrm{pl}=0.9211$, $\mathrm{BetP}=0.8289$. The agreement case
(both models supporting) contains no contradiction, nothing is discarded, and the combined
support is the probabilistic sum.

The construction also illustrates Zadeh's (\citeyear{zadeh1984}) criticism: when $m_1$
and $m_2$ are both large, little mass remains after conditioning, and few simulated
draws are retained. This model applies only when both sources cannot
have decisive evidence at the same time.

\subsection{Summary of part one}
\label{sec:corrsummary}

The world numbers refer to the betting worlds of Section~\ref{sec:worlds}, which reuse
these mechanisms.

\begin{center}
\small
\setlength{\tabcolsep}{4pt}
\begin{tabular}{l>{\raggedright\arraybackslash}p{9cm}c}
\toprule
rule & setting in which the rule is exact & world\\
\midrule
average & two readings, one meant, equally likely & 1\\
weighted average & as above, unequal; weights $=$ reading probabilities & --\\
maximum & two attempts, one success event contained in the other & 5\\
probabilistic sum & at least one of two independent attempts & 4\\
ProbLog conflict & independent supporting and opposing searches & 7\\
upco (multiplicative) & two posterior reports from conditionally independent evidence, prior $1/2$ & 2\\
odds pool & as above, any prior & 2$'$\\
geometric & doubled log-odds reports; averaged samples, shared prior & 3\\
MYCIN & agreement: independent evidence; conflict: each opposing item refutes one supporting item & 6\\
cumulation & mixture of complete overlap and independence; approximate for fixed shared items & 8\\
GK calculation & identified shared premises; all proofs combined together beyond two & 10\\
DS bel, pl, BetP & opposing sources conditioned on not both having decisive evidence; remaining ignorance divided equally & 9\\
\bottomrule
\end{tabular}
\end{center}

Figure~\ref{fig:calibration} gives reliability diagrams for the two-models and
alternative-reading mechanisms. Each rule's pooled values are grouped into 25 equal-count
bins and the observed frequency of the event in each bin is plotted against the bin's
mean pooled value. This is a reliability diagram \citep{guo2017}: if a rule returns the
correct conditional probability, its reliability curve lies on the diagonal. In the two-models world (left) upco lies on the diagonal, the
average and the geometric pool are underconfident for pooled values above $1/2$ and
overconfident below it, and the probabilistic sum shows the largest deviation. The
largest MYCIN bin deviation is $0.026$, against $0.011$ for upco: close to the diagonal
without matching it. In the alternative-reading world (right) the average and geometric
curves (nearly indistinguishable there) lie near the diagonal, while upco and MYCIN
deviate from it. Calibration depends on how the reports and outcomes are generated: a
rule calibrated under one dependence structure may be miscalibrated under another.
Calibration results are specific to the benchmark's distribution of reports and
dependence between sources.

\begin{figure}[H]
\centering
\includegraphics[width=\textwidth]{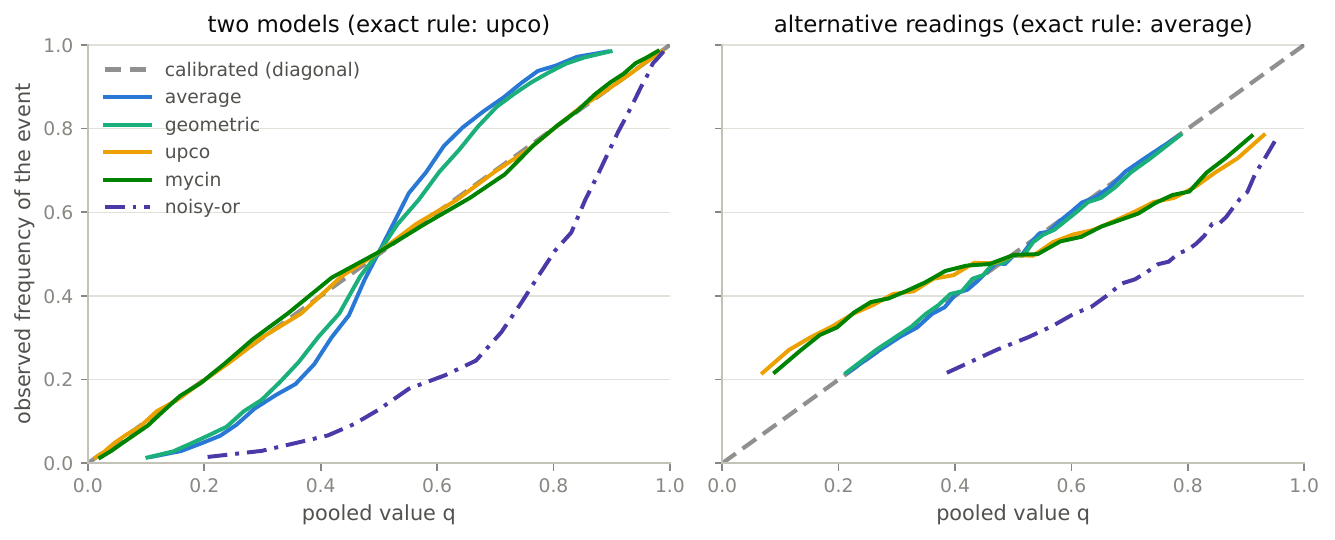}
\caption{Observed frequency of the event against the pooled value (25 equal-count bins
per rule) in the two-models world (left) and the alternative-reading world with uniformly
drawn reports (right). A point $(q,r)$ on a curve means that the cases assigned pooled
values near $q$ produced the event with frequency $r$; calibration is $r=q$. The rule
derived for each setting lies on the dashed diagonal: upco is calibrated in the
two-models setting, and the average in the alternative-reading setting.}
\label{fig:calibration}
\end{figure}

\FloatBarrier
\section{Experiments II: the cost of a mismatched rule}
\label{sec:betting}

\subsection{Logarithmic score and its betting interpretation}
\label{sec:game}

Section~3 established which pooling rule recovers the correct conditional
probability in each generating mechanism.  That distinction matters only to
the extent that different reported probabilities have consequences.  In
particular, several rules make the same decision at threshold $1/2$, so
ordinary binary accuracy at that threshold cannot measure the cost of using
the wrong rule.  We therefore reuse the same worlds and evaluate the reported
probabilities themselves, and the decisions they induce at different cost
thresholds.

The choice of evaluation criterion does not determine which rule is correct
in a generating mechanism.  Once the conditional probability in a world is
$p$, reporting $p$ is optimal under any strictly proper scoring rule; different
proper scores differ in how they weight deviations from $p$.  We use the
logarithmic score because it has a direct interpretation as long-run wealth
growth under Kelly betting.  Section~4.3 gives a complementary
decision-theoretic evaluation at varying cost thresholds, and Section~4.4
relates these threshold losses to proper scoring rules.

We first use logarithmic score, with Kelly wealth growth as a monetary
interpretation.

For the logarithmic score, consider the following betting realization.  Each
round, before the outcome is known, two sources provide their reports and a
pooling rule combines them into $q$.  At even odds, the current wealth is split
between the two outcomes: fraction $q$ is allocated to the event and fraction
$1-q$ to its complement.  If the event occurs, wealth is multiplied by $2q$;
otherwise it is multiplied by $2(1-q)$.  For example, with current wealth 100
and $q=0.8$, the two stakes are 80 and 20; wealth becomes 160 if the event
occurs and 40 otherwise.

This allocation is chosen because its logarithmic wealth increment is exactly
the binary logarithmic score, up to an additive constant; consequently its
expected value is uniquely maximized when $q$ equals the true conditional
probability.

Log wealth is measured in bits: a change of $+1$ bit doubles wealth, a change of $-1$ bit
halves it, and log changes add across rounds --- doubling and then halving gives $1-1=0$
bits and returns wealth to its starting value. The example's multipliers are
$\log_2 1.6=0.678$ bits and $\log_2 0.4=-1.322$ bits, and the reported score is the average
of such per-round log multipliers. The total log growth over $T$ rounds and its per-round
mean are
\begin{equation*}
B_T=\log_2\frac{W_T}{W_0}=\sum_{t=1}^{T}\log_2\frac{W_t}{W_{t-1}},
\qquad
\bar b_T=\frac{B_T}{T}:
\end{equation*}
$B_T=1$ means that wealth doubled over the run, and a mean of $g$ bits per round for $T$
rounds means $B_T=gT$ and $W_T/W_0=2^{gT}$; $0.01$ bits per round sustained for 100 rounds
is one bit in total, a factor of two. Positive mean log growth indicates geometric growth; negative mean log growth indicates
geometric decline. Absolute mean log
growth depends on the predictive information in the world; the oracle gap defined below
isolates the pooling error.

In per-round terms, let $y\in\{0,1\}$ be the outcome and $q$ the pooled probability.
At even odds the per-round growth, in bits, is
\begin{equation*}
g(q,y)=y\log_2(2q)+(1-y)\log_2\bigl(2(1-q)\bigr),
\end{equation*}
the base-2 logarithm of the wealth multiplier; $q=1/2$ gives zero for either outcome. The
score of a rule is the average of $g$ over the rounds. When the true probability of the
event is $p$, the expected score of quoting $q$ is
\begin{equation*}
G(p,q)=p\log_2(2q)+(1-p)\log_2\bigl(2(1-q)\bigr).
\end{equation*}
Apart from the constant introduced by the betting odds, $-g$ is the binary log loss, so
differences in $G$ are differences in expected log loss. The expected score is maximized
exactly at $q=p$, and the shortfall is the binary Kullback--Leibler divergence in bits,
\begin{equation*}
G(p,p)-G(p,q)=D_2(p\,\|\,q)
=p\log_2\frac{p}{q}+(1-p)\log_2\frac{1-p}{1-q}\;\ge\;0,
\end{equation*}
with equality exactly when $q=p$. The oracle gap is nonnegative; it equals both the loss
in expected log-wealth growth and the excess binary log loss.

This is a Kelly-style betting realization of the logarithmic score.
Under the two-outcome allocation above, quoting the true probability
maximizes expected logarithmic wealth growth; Kelly's original analysis
establishes the more general connection between probabilistic information,
betting odds, and maximal long-run growth \citep{kelly1956}.  Interior
forecasts $0<q<1$ retain positive wealth after either outcome, whereas a
forecast of $0$ or $1$ loses all wealth when wrong.  The experimental
protocol therefore clips every pooled value to
$[10^{-6},\,1-10^{-6}]$.

Two reference strategies recur in every table. The \emph{probability oracle} uses the
true conditional probability $p$ of the round; it knows $p$, not the realized outcome.
The \emph{baseline} uses $q=1/2$, which keeps wealth constant at even odds and scores
zero.

The divergence identity extends to odds that are fair for any base rate
\citep[ch.~6]{coverthomas2006}: changing the bookmaker's odds adds the same
outcome-dependent term to every rule's growth, so the oracle gap is invariant to the
odds. World $2'$ below therefore uses odds corresponding to base rate $0.3$.

The gap is also the excess binary cross-entropy on the log-wealth scale; the
growth--log-score equivalence is standard in forecasting
\citep{roulstonsmith2002,hagedornsmith2009,johnstone2012}, and binary cross-entropy is a
standard training loss for probabilistic classifiers \citep{lakshminarayanan2017}.

Proposition~\ref{prop:half} implies that its four formulas differ here only through
probability magnitude, since their even-odds decisions coincide.

\subsection{Betting results under ten generating mechanisms}
\label{sec:worlds}

For each of the main rules of Section~\ref{sec:corr} we run the betting game in a
setting where the rule is exact; the world numbers are those of the summary table in
Section~\ref{sec:corrsummary}. Every rule receives the same two reports; the oracle uses
the correct conditional probability, and the baseline uses $q=1/2$. Both table rows are
in bits per round, from runs of $10^5$ rounds.

How to read the tables:

\begin{center}
\small
\begin{tabular}{ll}
\toprule
item & meaning\\
\midrule
exact rule (heading) & the rule equal to the conditional probability in that setting\\
oracle/\emph{rule} (column head) & that rule coincides with the probability oracle\\
mean log growth & absolute score in this world, bits per round\\
oracle gap & excess log loss relative to the oracle, bits per round\\
baseline & $q=1/2$, zero growth at even odds\\
column order & decreasing realized mean log growth\\
\bottomrule
\end{tabular}
\end{center}

The oracle gap measures the excess log loss caused by the pooling rule; absolute mean
log growth also depends on how predictable the setting is. A gap of $d$ bits
per round compounds to a relative wealth factor of $2^{-dT}$ over $T$ rounds: for the
world-2 average, at $d=0.089$, the wealth ratio to the oracle is $2^{-0.89}\approx 0.54$
after ten rounds and about $1/478$ after 100.

\medskip
\noindent\textbf{World 1}, alternative readings of the question
(Section~\ref{sec:avgcorr}); exact rule: the average.
\begin{center}
\small
\begin{tabular}{lcccccc}
\toprule
 & oracle/average & geom. & baseline & upco & MYCIN & noisy-or\\
mean log growth & $+0.060$ & $+0.058$ & $0$ & $-0.025$ & $-0.032$ & $-0.250$\\
oracle gap & $0$ & $0.002$ & $0.060$ & $0.085$ & $0.092$ & $0.310$\\
\bottomrule
\end{tabular}
\end{center}
The average attains the oracle; at the reported $0.060$ bits per round, 100 rounds
correspond to $6$ bits of log growth, a wealth multiple of about $2^6=64$. Upco, MYCIN
and the probabilistic sum have negative mean log growth; the probabilistic sum has the
largest oracle gap.

\medskip
\noindent\textbf{World 2}, two separate models (Section~\ref{sec:upcocorr}); exact rule:
multiplicative (upco).
\begin{center}
\small
\begin{tabular}{lcccccc}
\toprule
 & oracle/upco & MYCIN & geom. & average & noisy-or & baseline\\
mean log growth & $+0.415$ & $+0.412$ & $+0.345$ & $+0.326$ & $+0.139$ & $0$\\
oracle gap & $0$ & $0.003$ & $0.070$ & $0.089$ & $0.276$ & $0.415$\\
\bottomrule
\end{tabular}
\end{center}
The average has an oracle gap of $0.089$ bits per round, 21\% of the oracle's mean log
growth; the MYCIN gap is $0.003$ bits per round.

Figure~\ref{fig:wealth} shows one simulated cumulative log-wealth path for each rule
over 500 rounds, in the two-models world and in the alternative-reading
world; the slopes estimate the mean log growth of the tables. In the two-models world all
displayed curves have positive slopes; upco has the largest slope and overlaps the
oracle's dashed curve. In the
alternative-reading world the average and the geometric pool increase, upco remains near zero
cumulative log growth, and the probabilistic sum decreases by about 90 bits over the same
rounds in which the oracle's cumulative log wealth increases by about 40 bits.

\begin{figure}[H]
\centering
\includegraphics[width=\textwidth]{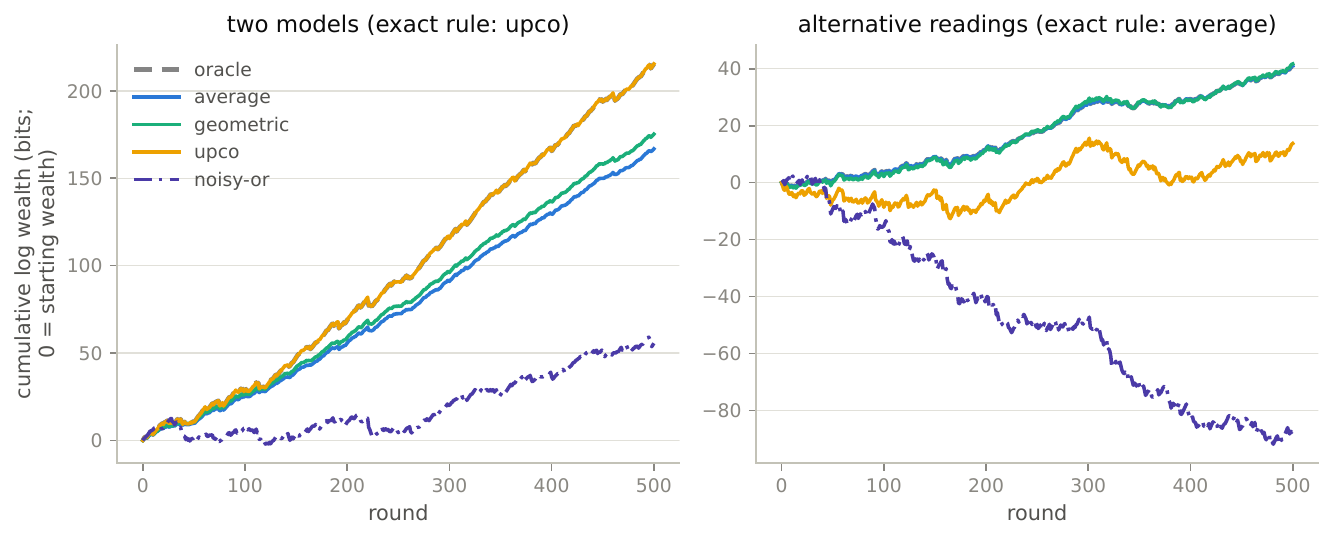}
\caption{One cumulative log-wealth path per pooling rule over 500 rounds at even odds
(bits; 0 = starting wealth), in the two-models world (left) and
the alternative-reading world (right). A vertical difference of one bit corresponds to a factor
of two in wealth; a
height of 10 is a wealth factor of $2^{10}=1024$.
Slopes equal the mean log growth reported in the world tables.
In the left panel the oracle's dashed curve is hidden beneath upco.}
\label{fig:wealth}
\end{figure}

\textbf{World 2$'$} uses the same source model with prior $0.3$ (odds priced
accordingly): the
odds pool attains the oracle growth $+0.358$; uncorrected upco reaches $+0.302$, an oracle gap
of $0.056$ bits per round from counting the prior twice.

\medskip
\noindent\textbf{World 3}, doubled log-odds reports (Section~\ref{sec:geocorr}); exact rule: the
geometric pool.
\begin{center}
\small
\begin{tabular}{lcccccc}
\toprule
 & oracle/geom. & average & MYCIN & upco & noisy-or & baseline\\
mean log growth & $+0.382$ & $+0.367$ & $+0.271$ & $+0.264$ & $+0.001$ & $0$\\
oracle gap & $0$ & $0.015$ & $0.111$ & $0.118$ & $0.381$ & $0.382$\\
\bottomrule
\end{tabular}
\end{center}

\medskip
\noindent\textbf{World 4}, independent attempts (Section~\ref{sec:trapscorr}); exact rule: the
probabilistic sum.
\begin{center}
\small
\begin{tabular}{lcccccc}
\toprule
 & oracle/noisy-or & average & geom. & MYCIN & upco & baseline\\
mean log growth & $+0.277$ & $+0.063$ & $+0.063$ & $+0.015$ & $+0.008$ & $0$\\
oracle gap & $0$ & $0.214$ & $0.214$ & $0.262$ & $0.269$ & $0.277$\\
\bottomrule
\end{tabular}
\end{center}

\medskip
\noindent\textbf{World 5}, difficulty-nested attempts (Section~\ref{sec:trapscorr}); exact rule:
the maximum.
\begin{center}
\small
\begin{tabular}{lcccccc}
\toprule
 & oracle/max & average & geom. & noisy-or & MYCIN & upco\\
mean log growth & $+0.125$ & $+0.062$ & $+0.062$ & $+0.038$ & $+0.014$ & $+0.007$\\
oracle gap & $0$ & $0.063$ & $0.063$ & $0.087$ & $0.111$ & $0.118$\\
\bottomrule
\end{tabular}
\end{center}

The geometric pool is exact for doubled log-odds, the probabilistic sum for independent
attempts, and the maximum for nested attempts.

\medskip
\noindent\textbf{World 6}, review with one-to-one refutation (Section~\ref{sec:mycincorr});
exact rule: the MYCIN conflict rule. Inputs are the two evidence strengths; the table
compares the conflict rules.
\begin{center}
\small
\begin{tabular}{lcccc}
\toprule
 & oracle/MYCIN & ProbLog & baseline & naive\\
mean log growth & $+0.140$ & $+0.051$ & $0$ & $-0.029$\\
oracle gap & $0$ & $0.089$ & $0.140$ & $0.169$\\
\bottomrule
\end{tabular}
\end{center}

\medskip
\noindent\textbf{World 7}, review with independent counter-evidence
(Section~\ref{sec:problogcorr}); exact rule: the ProbLog rule.
\begin{center}
\small
\begin{tabular}{lcccc}
\toprule
 & oracle/ProbLog & naive & baseline & MYCIN\\
mean log growth & $+0.074$ & $+0.023$ & $0$ & $-0.023$\\
oracle gap & $0$ & $0.051$ & $0.074$ & $0.097$\\
\bottomrule
\end{tabular}
\end{center}
Worlds 6 and 7 use identical inputs and differ only in how the opposing evidence relates
to the supporting evidence: MYCIN is exact under item-level refutation, the ProbLog rule
under independent opposition; the naive difference is exact in neither setting.

\medskip
\noindent\textbf{World 8}, two chains sharing evidence with a known probability $a$ of
independence (Section~\ref{sec:cumulcorr}); exact rule: cumulation.
\begin{center}
\small
\begin{tabular}{lccccc}
\toprule
 & oracle/cumul & max & noisy-or & average & baseline\\
mean log growth & $+0.326$ & $+0.309$ & $+0.302$ & $+0.162$ & $0$\\
oracle gap & $0$ & $0.017$ & $0.024$ & $0.164$ & $0.326$\\
\bottomrule
\end{tabular}
\end{center}
World 8 uses one fixed overlap parameter, the probability $a$ that the chains are
independent. World 10 below computes the overlap separately from the shared premises in
each round.

\medskip
\noindent\textbf{World 9}, contradicting models (Section~\ref{sec:dscorr}); exact rule: the
pignistic value BetP.
\begin{center}
\small
\begin{tabular}{lccccccc}
\toprule
 & oracle/BetP & bel & pl & MYCIN & baseline & ProbLog & naive\\
mean log growth & $+0.227$ & $+0.171$ & $+0.121$ & $+0.065$ & $0$ & $-0.068$ & $-0.338$\\
oracle gap & $0$ & $0.056$ & $0.106$ & $0.162$ & $0.227$ & $0.295$ & $0.565$\\
\bottomrule
\end{tabular}
\end{center}
Here $\mathrm{bel}<\mathrm{BetP}<\mathrm{pl}$; both bounds yield positive mean log
growth, while the rules without the renormalization have negative mean log growth. Under the pignistic decision convention
used here, $\mathrm{bel}$ and $\mathrm{pl}$ are bounds and $\mathrm{BetP}$ is the
decision probability (Section~\ref{sec:ds} lists competing decision criteria).

\medskip
\noindent\textbf{World 10}, two derivations sharing visible evidence
(Section~\ref{sec:gkcorr}); exact rule: the GK calculation. Each round builds two
four-item derivations. The number of shared items $k$ and the item validities are
resampled each round, with validities drawn from $U(0.6,0.95)$. The inputs are $c_1$,
$c_2$, and the shared-premise probability $s$, computed from the derivation event sets.
World 8 uses one fixed dependence parameter; here the overlap varies by round.
\begin{center}
\small
\setlength{\tabcolsep}{4pt}
\begin{tabular}{lccccccc}
\toprule
 & oracle/GK & cumul (count) & cumul (0.5) & max & baseline & noisy-or & average\\
mean log growth & $+0.048$ & $+0.047$ & $+0.031$ & $+0.005$ & $0$ & $-0.003$ & $-0.025$\\
oracle gap & $0$ & $0.001$ & $0.017$ & $0.043$ & $0.048$ & $0.051$ & $0.073$\\
\bottomrule
\end{tabular}
\end{center}
The GK value matches the oracle ($+0.0480$ bits per round against $+0.0466$ for cumulation
with the count-based parameter). With item probabilities drawn from this distribution,
count-based cumulation has an oracle gap of $0.0014$--$0.0018$ bits per round (five
random seeds); the asymmetric configurations of Section~\ref{sec:gkcorr} produce larger errors.
The probabilistic sum and the average have negative mean log growth while the oracle's is
positive.

\medskip
Across worlds 1--5 the five recurring formulas allow a direct cross-world comparison; rows
are worlds, columns rules, entries oracle gaps in bits per round (the exact rule of world
5, the maximum, is not in the column set):
\begin{center}
\small
\begin{tabular}{lccccc}
\toprule
world (exact rule) & average & geom. & upco & MYCIN & noisy-or\\
\midrule
1 (average) & 0 & 0.002 & 0.085 & 0.092 & 0.310\\
2 (upco) & 0.089 & 0.070 & 0 & 0.003 & 0.276\\
3 (geometric) & 0.015 & 0 & 0.118 & 0.111 & 0.381\\
4 (noisy-or) & 0.214 & 0.214 & 0.269 & 0.262 & 0\\
5 (maximum) & 0.063 & 0.063 & 0.118 & 0.111 & 0.087\\
\bottomrule
\end{tabular}
\end{center}

\subsection{Threshold decisions and cost-loss results}
\label{sec:costloss}

The second evaluation uses threshold decisions and their costs. A system receives the
two reports about whether an answer is faulty and decides whether to send the answer for
human review or to release it. Review costs $C$; releasing a faulty answer costs $L>C$.
We set $L=1$ and write $c=C/L$: the cost of review as a fraction of the loss it prevents.
The costs per round:
\begin{center}
\small
\begin{tabular}{lcc}
\toprule
 & no fault & fault\\
\midrule
send for review & $c$ & $c$\\
release the answer & $0$ & $1$\\
\bottomrule
\end{tabular}
\end{center}
If the pooled probability of a fault is $q$, the expected cost of review is $c$ and the
expected cost of releasing is $q$, so the answer is sent for review exactly when $q>c$:
with $q=0.30$, review is chosen when $c=0.20$ and not when $c=0.40$. This is the standard
cost-loss model \citep{murphy1977,richardson2000,wilks2001} and corresponds to selective
prediction with escalation \citep{geifman2017}.

Expenses are reported relative to perfect foresight, which incurs cost $c$ on exactly
the faulty rounds and zero otherwise. Additional expense is $c$ for an unnecessary review and
$1-c$ for releasing a faulty answer (the full loss 1 where foresight pays $c$). We score a rule by this
\emph{avoidable expense}: the average expense beyond perfect foresight, per round; lower
is better. Subtracting the foresight spending
changes no comparison (it is the same for every rule) and makes the scale start at zero.
The candidate rule uses the pooled $q$. The probability oracle uses the conditional
probability $p$. Perfect foresight uses the realized outcome $y$ and defines the zero
reference of the expense scale. The probability oracle still incurs expense because the
outcome is unknown. Its curve is the minimum attainable by a probability
forecast: the true probability minimizes the avoidable expense at every cost ratio
simultaneously. A mismatched rule incurs additional cost at exactly those $c$
where $q$ and $p$ imply different actions.

The review protocol was run in worlds 1 and 2 at cost ratios 0.2, 0.35, 0.5, 0.65, 0.8. In
world 2 upco attains the minimal expense at every ratio; at ratio 0.5 the four compared
formulas tie to four decimal places (Proposition~\ref{prop:half}); at ratios 0.2 and 0.8
the average and the geometric pool incur additional expense (for example 0.182 against
the oracle minimum 0.165 at ratio 0.2). In world 1 the average equals the oracle minimum
and the multiplicative pool has higher expense at $c=0.2$ and $c=0.8$. The cost-loss tables express the
same forecast errors as threshold-dependent operational costs (Section~\ref{sec:schervish}
gives the formal correspondence).

Figure~\ref{fig:murphy} plots the same avoidable expense for every review-cost ratio
at once, one curve per rule. This plot is called a Murphy diagram \citep{ehm2016}.

Reading the figure. At $c=0.2$, each curve gives the average avoidable expense of its rule
when review costs 20\% of the prevented loss; the vertical gap from the oracle
curve is the extra expense caused by using that rule, incurred where $q$ and $p$ imply
different actions at that threshold. Repeating the
comparison for all $c$ gives the full curve. The shared curvature results from
subtracting the perfect-foresight cost and does not affect comparisons among the rules.
At $c=1/2$ the four formulas of Proposition~\ref{prop:half} induce the same decision and
their curves meet; the probabilistic-sum curve does not pass through that intersection.

\begin{figure}[H]
\centering
\includegraphics[width=\textwidth]{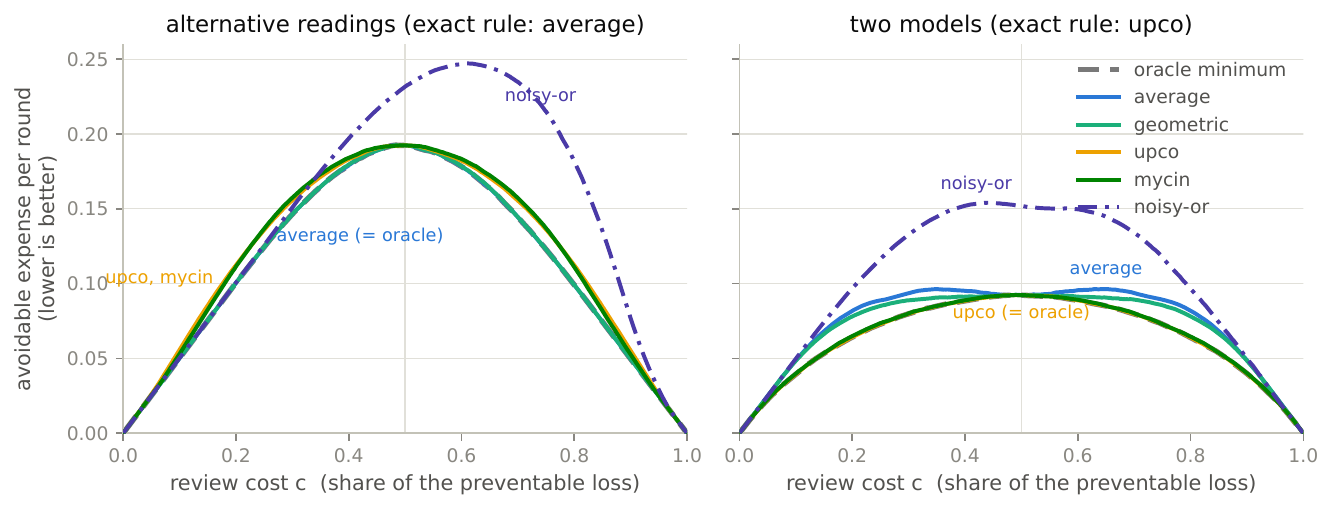}
\caption{Murphy diagrams for the alternative-reading world (left; exact rule: the average) and
the two-models world (right; exact rule: upco). Horizontal axis: cost ratio $c$, the cost
of review as a share of the prevented loss. Vertical axis: average avoidable
expense beyond perfect foresight; lower is better. The dashed curve is the probability
oracle, the attainable minimum. The distance from the oracle curve, integrated with weight
$1/(c(1-c))$ and divided by $\ln 2$, equals the rule's betting oracle gap in bits
(Section~\ref{sec:schervish}).}
\label{fig:murphy}
\end{figure}

\subsection{Relation between the log score and the threshold losses}
\label{sec:schervish}

The betting oracle gap is a weighted integral of the excess costs over all decision
thresholds. Each threshold $c$ defines one binary decision problem, and a proper score
integrates these threshold losses with a specified weight measure. For the logarithmic score of the betting game the combination is:
\begin{equation*}
\text{betting oracle gap}
=\int_0^1 \bigl(\text{extra cost-loss expense at threshold }c\bigr)\,
\frac{\mathrm{d}c}{c(1-c)\,\ln 2}\,.
\end{equation*}
The general form is a representation theorem of
\citet{schervish1989}. The avoidable expense of a forecast $q$ on outcome $y\in\{0,1\}$ at
cost ratio $c$ is the elementary loss
\[
S_c(q,y)=
\begin{cases}
c & \text{if } q>c \text{ and } y=0 \quad\text{(unnecessary review)},\\
1-c & \text{if } q\le c \text{ and } y=1 \quad\text{(unreviewed fault)},\\
0 & \text{otherwise}.
\end{cases}
\]
Every sufficiently regular binary proper score has a representation
\begin{equation*}
S(q,y)=\int_0^1 S_c(q,y)\,\nu(\mathrm{d}c),
\end{equation*}
up to terms independent of $q$, for a nonnegative weight measure $\nu$; conversely, every
such mixture is a proper score. The measure $\nu$ determines the relative weight of
each threshold.\footnote{The weights are not probability densities, and the logarithmic
weight has no finite total.} Two examples are:
\begin{center}
\small
\begin{tabular}{ll}
\toprule
score & threshold weight $\nu(\mathrm{d}c)$\\
\midrule
logarithmic & $\mathrm{d}c/(c(1-c))$ --- more weight near 0 and 1\\
Brier & $2\,\mathrm{d}c$ --- equal weight over thresholds\\
\bottomrule
\end{tabular}
\end{center}
The logarithmic weight yields the displayed identity, given by \citet{ehm2016};
\citet{fissler2022} discuss its use in machine learning. The logarithmic weight $1/(c(1-c))$ grows without bound near $0$ and $1$, assigning
greater weight to decision errors at extreme thresholds. The Brier score assigns constant weight
over thresholds, so its rule rankings can differ.

The identity was checked numerically on the same rounds:
for every rule, the area between its expense curve and the oracle's, weighted by $1/(c(1-c))$
and divided by $\ln 2$ to convert natural logarithm to bits,
reproduces the rule's oracle gap from the betting run to four decimal places: for the
average in the two-models world, both calculations give $0.0891$ bits per round.

\subsection{The weighted and extremized log-odds family}
\label{sec:knob}

We vary the symmetric coefficient $\alpha$ of the family pooled-log-odds
$=w_1\ell_1+w_2\ell_2$ of Section~\ref{sec:geo} from $0.5$ to $2.5$ in four settings.
Both source weights equal $\alpha$, and the corresponding extremization factor is
$\gamma=2\alpha$.
\begin{enumerate}
\item Calibrated reports based on conditionally independent evidence: the oracle gap is
zero at $\alpha=1$ (upco; $\gamma=2$).
\item Both reports based on the same observation (the fully shared setting): the gap is
zero at $\alpha=1/2$ (geometric; $\gamma=1$, no extremization).
\item Underconfident reports, each model stating half its justified log-odds: the gap is
zero at $\alpha=2$ ($\gamma=4$), which gives logits twice the average reported logit. The
same fitted exponent is equivalent to temperature scaling applied to uniformly distorted
logits \citep{guo2017}.
\item Each report based on one shared and one source-specific observation (the
half-shared setting; fixed accuracies 0.8 and 0.75): the best symmetric $\alpha$ is 0.6,
and a gap of 0.044 bits per round remains at every $\alpha$. The required adjustment
depends on the observations of the individual round, which one common coefficient cannot
represent. One common log-odds coefficient exactly corrects uniform distortion and
complete duplication, but generally not partial sharing.
\end{enumerate}

Figure~\ref{fig:alpha} plots all four settings on one axis: the oracle gap as a function
of $\alpha$. The first three oracle-gap functions equal zero at their
predicted exponents (1, 1/2, 2); the half-shared setting has a positive minimum oracle gap
at an interior $\alpha$.

\begin{figure}[H]
\centering
\includegraphics[width=0.72\textwidth]{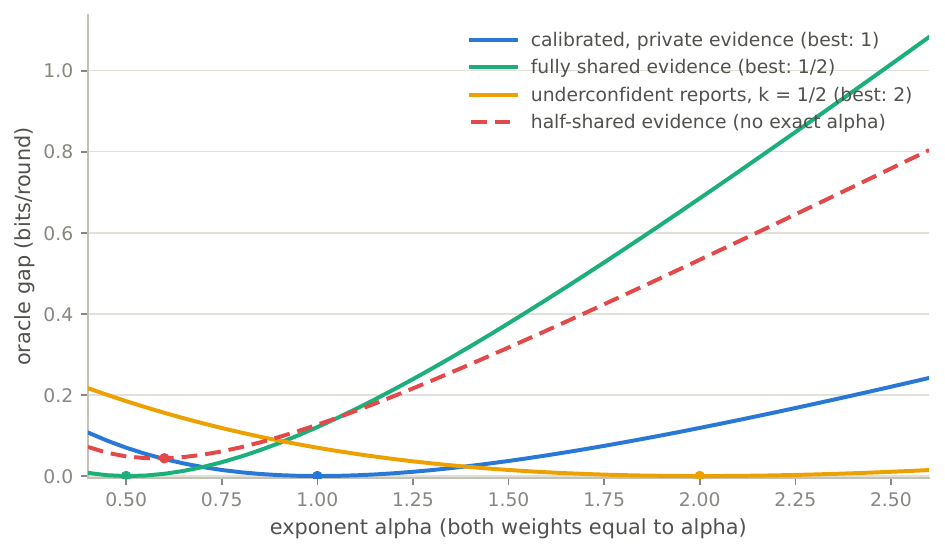}
\caption{The oracle gap (excess log loss) across the symmetric log-odds exponent $\alpha$
(the coefficient applied to each source's log-odds; extremization factor $\gamma=2\alpha$),
in four evidence settings. Lower is better; zero is oracle performance. Dots mark the best
$\alpha$ per setting. Calibrated private evidence,
fully shared evidence and uniformly underconfident reports reach the oracle at $\alpha=1$,
$1/2$ and $2$; half-shared evidence has an interior best $\alpha$ with a positive
minimum.}
\label{fig:alpha}
\end{figure}

Asymmetric distortion requires asymmetric weights: with model 1 doubling its log-odds and
model 2 halving its own, the weighted pool with $(w_1,w_2)=(1/2,\,2)$ attains the oracle
($+0.380$), while the best symmetric member reaches $+0.280$ and the average,
geometric, upco and noisy-or rules reach at most $+0.295$. On the probability scale, in the alternative-reading setting in which
the two cases have probabilities 0.7 and 0.3, the 0.7-weighted average attains the oracle and the
equal-weight average has an oracle gap of 0.022 bits per round.

These experiments treat the distortion factors, mixture weights, and overlap structure as
known inputs, as in worlds 8 and 10. Estimating these quantities from data, as
performance-weighting schemes do \citep{cooke1991}, is not studied here.

\subsection{Relation to existing evaluation practice}
\label{sec:evalpractice}

Forecast-combination studies evaluate aggregators by proper scores, calibration, and
discrimination \citep{ranjangneiting2010,satopaa2014,baron2014,jose2014}. The geometric
pools are externally Bayesian --- pooling and then conditioning on shared evidence
commute \citep{genest1984}; \citet{baccellistewart2023} analyse when geometric pooling
coincides with the update of a single Bayesian observer of the sources;
\citet{ranjangneiting2010} prove that nontrivial linear pools of distinct calibrated
forecasts are overdispersed under their stated conditions, and propose a recalibrating
transformation. Language-model
studies mainly compare voting and linear-pooling variants by expected calibration error
and failure-prediction AUROC \citep{xiong2024,lyu2024,verga2024}, and usually do not
model dependence among the reports. Betting-based evaluation has been used for individual
forecasters and forecast panels: Kelly-wealth comparisons
\citep{johnstone2007,johnstone2012}, betting returns against market odds
\citep{grantjohnstone2010}, simulated decision-making comparisons of inductive-logic
methodologies \citep{radzvilas2023}, and the weather-roulette presentation of forecast
value \citep{hagedornsmith2009}. The experiments compare probability pools, confidence
rules, Dempster--Shafer quantities, and proof unions with shared premises under stated
assumptions. Kelly growth represents the logarithmic score; the cost-loss curves show
how the same score weights decisions at different thresholds
\citep{schervish1989,gneitingraftery2007,ehm2016}.

\FloatBarrier
\section{Summary}
\label{sec:summary}

The paper addresses two questions about probability and confidence
combination.  First, what assumptions make a particular pooling rule correct?
Second, what is lost when a rule is used outside those assumptions?

The main results are:
\begin{enumerate}

\item The appropriate combination rule depends on both what the input numbers
represent and how their sources are related.  The paper gives explicit
generating conditions for rules that combine alternative cases, independent or
shared evidence, component events with different dependence relations,
supporting and opposing evidence, belief masses, and derivations with shared
premises.  The controlled Monte Carlo experiments reproduce the probabilities
derived for these settings.

For example, averaging is exact for alternative cases, prior-corrected odds
pooling for conditionally independent evidence with a common prior, and
maximum or probabilistic sum for different dependence relations between
component events.

\item Using a rule outside the conditions for which it is correct can
substantially increase both probability-sensitive loss and decision cost.
Several numerically different rules nevertheless make the same binary decision
at threshold $1/2$, so accuracy at that single threshold can hide important
differences between them.

\item Different ways in which supporting and opposing evidence interact also
lead to different combination rules.  Independent support and opposition give
the ProbLog-style construction; opposition that refutes particular supporting
items gives the revised MYCIN rule; and conditioning on the two sides not both
being decisive gives the Dempster--Shafer construction studied here.

\item When several proofs share uncertain premises, retaining the identities
of those premises makes it possible to calculate directly the probability
that at least one proof is available.  Reducing each proof to a single
probability loses information about shared premises and is insufficient in
general for combining three or more proofs.

\item Weighted log-odds pooling can exactly correct some simple forms of
systematic dependence or miscalibration, such as repeated evidence or a common
multiplicative distortion of the reported log-odds.  More general partial
sharing requires information about which evidence is shared.
\end{enumerate}

To choose a combination rule, three questions should be answered.

\begin{enumerate}
\item What do the inputs represent?  They may be posterior probabilities
about the same outcome, probabilities of separate component events, signed
strengths of evidence, belief masses, or proofs together with the uncertain
premises on which they depend.

\item How are the sources or events related?  For posterior reports, the
relevant distinction is whether they describe alternative cases, independent
evidence about the same case, or evidence that is shared or systematically
distorted.  For component events and proofs, the relevant question is whether
their occurrence is nested, independent, only partially dependent, or linked
through identified shared premises.

\item If evidence supports opposite sides of a claim, how does that
opposition operate?  The two sides may arise independently, opposing items may
refute particular supporting items, or the model may exclude cases in which
both sides are simultaneously decisive.
\end{enumerate}

These distinctions determine the applicable rules.  Alternative cases lead
to weighted averaging.  Conditionally independent evidence about the same
outcome leads to prior-corrected odds pooling; geometric or weighted
log-odds pooling is exact under the particular duplication and distortion
models studied here.  For component events, nested events give the maximum,
independent events the probabilistic sum, and the Fr\'echet bounds delimit
what can be inferred from marginals alone.  For overlapping proofs,
cumulation approximates dependence from an overlap parameter, whereas the GK
calculation uses identified shared premises directly.  For opposing evidence,
the constructions studied here give the ProbLog-style rule under independent
opposition, the revised MYCIN rule under item-level refutation, and
Dempster--Shafer quantities when simultaneous decisive support and opposition
are excluded.

Which of these assumptions approximately holds in a real application is an
empirical question.  Applying the rules in practice may therefore require
calibration data, estimates of dependence between source errors, and records
of which evidence is reused.

\paragraph{Limitations and future work.}
Most of the analysis concerns two reports and binary questions; extending the
results to larger collections of reports requires assumptions about their
joint dependence.  The experiments assume that the applicable generating
mechanism and its parameters are known; the paper does not study how to infer
them from data.  The rules are evaluated as isolated combination steps.
Future work should evaluate their use in complete prediction and reasoning
pipelines, including the calibration of propagated confidence values against
observed outcomes.

\section*{Acknowledgments}

This research was supported by the European Union and the Estonian Research Council
through project TEM-TA141, and the Estonian Centre of Excellence in Artificial Intelligence
(EXAI) project TK213U8, funded by the Estonian Ministry of Education and Research.

\section*{Declaration on generative AI}

During the preparation of this work the authors used generative-AI assistants ---
Claude Opus 4.8 and Claude Fable 5 (Anthropic) and GPT-5.6 Sol (OpenAI) ---
to draft experiment code, generate figures, edit text, and review drafts.
After using these tools, the authors reviewed and edited the content as needed
and take full responsibility for the content of this paper.

\bibliographystyle{plainnat}

\appendix

\section{Experiment code}

\begin{sloppypar}
Release \texttt{v2.0}, commit \texttt{40f2cac}, of
\url{https://github.com/tammet/poolingworlds} contains the complete experiment suite. The
default \texttt{run\_all.py} execution (about two minutes) reproduces the reported
results, and \texttt{expected\_output.txt} permits deterministic comparison:
\texttt{python3 run\_all.py | diff - expected\_output.txt}. Every experiment fixes its
random seeds; the tested environment was CPython~3.12.3, with NumPy~1.26.4 and
Matplotlib~3.6.3 for the figure scripts. The suite consists of small self-contained
Python files:
\end{sloppypar}
\begin{itemize}\itemsep 1pt \sloppy
\item \texttt{poolib.py} --- the shared pooling-function library;
\item \texttt{corr\_*.py} --- one file per Section~3 simulation, including
\texttt{corr\_gk\_measured\_overlap.py};
\item \texttt{decide\_kelly\_betting.py}, \texttt{decide\_kelly\_scenarios.py},
\texttt{decide\_kelly\_gk\_world.py} --- the betting harness and worlds (the GK
experiments are suite sections 7b and 10b);
\item \texttt{decide\_cost\_loss.py}, \texttt{decide\_murphy\_diagram.py} --- the
cost-loss protocol and Murphy diagram;
\item \texttt{decide\_extremized\_weighted\_logodds.py} --- the log-odds family sweep;
\item \texttt{run\_all.py} --- the runner; the \texttt{--extras} flag additionally runs
exploratory scripts not used by this paper.
\end{itemize}
\begin{sloppypar}
\noindent The GK reasoner itself is publicly usable through binaries, documentation,
examples and samplers at \url{https://github.com/tammet/gkreasoner} and through a browser
interface at \url{https://logictools.org/commonsense.html}. The central components are
reproduced below; function and file names match the repository.
\end{sloppypar}

\subsection{Pooling functions (from poolib.py)}

\begin{verbatim}
def avg(p, q):  return (p + q) / 2
def mx(p, q):   return max(p, q)
def por(x, y):  return x + y - x * y            # probabilistic sum / noisy-or

def upco(p, q):
    d = p * q + (1 - p) * (1 - q)
    return 0.5 if d == 0 else (p * q) / d

def geo(p, q, w=0.5):
    a = (p ** w) * (q ** w); b = ((1 - p) ** w) * ((1 - q) ** w)
    return 0.5 if a + b == 0 else a / (a + b)

def wlogodds(p, q, w1, w2):                     # weighted log-odds pool
    p = min(max(p, 1e-12), 1 - 1e-12); q = min(max(q, 1e-12), 1 - 1e-12)
    x = w1 * math.log(p / (1 - p)) + w2 * math.log(q / (1 - q))
    return 1 / (1 + math.exp(-x))

def bayes_odds_pool(ps, prior=0.5):             # odds pool with prior correction
    O0 = prior / (1 - prior)
    o = 1.0
    for p in ps: o *= p / (1 - p)
    o /= O0 ** (len(ps) - 1)
    return o / (1 + o)

def mycin_diff(x, y):                           # MYCIN conflict on strengths
    if x == y: return 0.0
    return (x - y) / (1 - y) if x > y else (y - x) / (1 - x)

# The revised MYCIN rule carried to the confidence scale [0,1], 0.5 = "no
# information", via strength = 2*conf - 1 (probtopneg below; ab = abs).
def probtopneg(x):  return x * 2 - 1
def pnegtoprob(x):  return (x + 1) / 2

def mycin(x, y):
    if (x == 0 and y == 1) or (x == 1 and y == 0): return 0.5
    x1 = abs(probtopneg(x)); y1 = abs(probtopneg(y))
    if x >= 0.5 and y >= 0.5:                   # agreement: noisy-or of strengths
        z = por(x1, y1)
    elif x <= 0.5 and y <= 0.5:
        z = -por(x1, y1)
    else:                                       # conflict: renormalized difference,
        z = mycin_diff(x1, y1)                  # sign of the stronger side
        if (x >= 0.5) != (x1 >= y1): z = -z
    return pnegtoprob(z)

# (Identical to the repository version, which spells out the 0/1 corner cases
# in explicit branches; the ProbLog conflict on strengths is a * (1 - b).)

def cumul(c1, c2, a):                           # CONFER cumulation
    return max(c1 + c2 * a, c1 * a + c2) - c1 * c2 * a
def gkpool(c1, c2, s):                     # GK two-proof union formula
    # two derivations, shared part visible with probability s (s=1: disjoint);
    # requires c1 <= s and c2 <= s (so s >= max(c1,c2));
    # s=1 -> noisy-or, s=max(c1,c2) (nested support) -> max
    return c1 + c2 - c1 * c2 / s

# In the ds_* functions the 0.5 returned at total conflict (m1*m2 = 1) is an
# experimental boundary convention; Dempster's rule itself is undefined there.
def ds_bel(m1, m2):
    n = 1 - m1 * m2
    return (m1 * (1 - m2)) / n if n > 0 else 0.5

def ds_pl(m1, m2):
    n = 1 - m1 * m2
    return 1 - (m2 * (1 - m1)) / n if n > 0 else 0.5

def ds_pignistic(m1, m2):
    n = 1 - m1 * m2
    if n <= 0: return 0.5
    return (m1 * (1 - m2) + 0.5 * (1 - m1) * (1 - m2)) / n
\end{verbatim}

The four-line ProbLog program for the support-with-veto construction of
Section~\ref{sec:problog}:

\begin{verbatim}
0.8::support.    0.3::oppose.
claim :- support, \+oppose.
query(claim).
\end{verbatim}

\noindent for which ProbLog returns $P(\texttt{claim})=0.8\cdot(1-0.3)=0.56=a(1-b)$.

\subsection{Betting harness and world generators (condensed)}

\begin{verbatim}
def growth_table(rounds, pools, prior):
    """rounds: list of (p_true, outcome, x1, x2);
    pools: name -> f(x1, x2, p_true) -> q.
    Stake split q : 1-q; the bookmaker pays 1/prior on the event."""
    g = {name: 0.0 for name in pools}
    for p_true, out, x1, x2 in rounds:
        for name, fn in pools.items():
            q = min(max(fn(x1, x2, p_true), 1e-6), 1 - 1e-6)
            g[name] += math.log2((q if out else 1 - q)
                                 / (prior if out else 1 - prior))
    return {name: g[name] / len(rounds) for name in pools}

def world_mixture(T):                      # world 1: exact rule = average
    rounds = []
    for _ in range(T):
        p1, p2 = random.choice([(0.30, 0.80), (0.15, 0.65),
                                (0.60, 0.90), (0.20, 0.50)])
        rate = p1 if random.random() < 0.5 else p2
        rounds.append(((p1 + p2) / 2, random.random() < rate, p1, p2))
    return rounds

def world_experts(T, prior=0.5):           # world 2: exact rule = odds pool
    rounds = []
    for _ in range(T):
        H = random.random() < prior
        rep = []
        for _ in range(2):
            acc = random.uniform(0.55, 0.95)
            sig = H if random.random() < acc else not H
            lh, ln = (acc, 1 - acc) if sig else (1 - acc, acc)
            rep.append(lh * prior / (lh * prior + ln * (1 - prior)))
        rounds.append((bayes_odds_pool(rep, prior), H, rep[0], rep[1]))
    return rounds

def poisson(lam):                          # Knuth's sampler (pure Python)
    L = math.exp(-lam); k, p = 0, 1.0
    while True:
        k += 1; p *= random.random()
        if p <= L: return k - 1

def world_refutation(T):                   # world 6: exact rule = MYCIN conflict
    rounds = []
    for _ in range(T):
        a = random.uniform(0.5, 0.95); b = random.uniform(0.05, a - 0.05)
        la, lb = -math.log(1 - a), -math.log(1 - b)
        n_for = poisson(la)                     # supporting signs
        survivors = sum(random.random() >= lb / la for _ in range(n_for))
        rounds.append(((a - b) / (1 - b), survivors >= 1, a, b))
    return rounds

def world_witnesses(T):                    # world 9: exact rule = DS pignistic
    rounds = []
    for _ in range(T):
        m1 = random.uniform(0.5, 0.95); m2 = random.uniform(0.05, m1 - 0.05)
        while True:
            r1, r2 = random.random() < m1, random.random() < m2
            if not (r1 and r2): break           # both decisive: redraw
        out = r1 or ((not r2) and random.random() < 0.5)
        rounds.append((ds_pignistic(m1, m2), out, m1, m2))
    return rounds
\end{verbatim}

The remaining world generators (the attempts world; doubled log-odds; independent
counter-evidence; all-or-nothing sharing) follow the same pattern and are in the
repository.

\subsection{A complete simulation check}

The following script is representative of the Section~\ref{sec:corr} experiments
(\texttt{corr\_averaging\_mixture.py}, complete except for print formatting): simulate the
mechanism, estimate the outcome frequency, and compare it with the candidate formulas.

\begin{verbatim}
def main(N=200_000, seed=1):
    random.seed(seed)
    pairs = [(0.30, 0.80), (0.15, 0.65), (0.20, 0.50),
             (0.60, 0.90), (0.40, 0.70)]
    for p, q in pairs:
        hits = 0
        for _ in range(N):
            rate = p if random.random() < 0.5 else q   # selected case rate
            if random.random() < rate:                 # sampled outcome
                hits += 1
        freq = hits / N
        # compare freq against avg, max, noisy-or, upco, geo on (p, q)
        # and report the best-matching rule; expected: avg, within +-0.002
        print(poolib.fmt_compare(freq, p, q))
\end{verbatim}

\subsection{Cost-loss components}

The following function implements the elementary cost-loss value of
Section~\ref{sec:costloss}, stated formally in Section~\ref{sec:schervish}; the final
comment states its numerical relation to the betting oracle gap.

\begin{verbatim}
def elementary_loss(q, y, c):        # forecast q, outcome y, cost ratio c
    act = q > c
    if y and not act:  return 1 - c  # unreviewed fault
    if act and not y:  return c      # unnecessary review
    return 0.0

# Schervish check: for each rule, mean elementary loss minus the oracle's,
# integrated over c with weight 1/(c*(1-c)), divided by ln 2, equals the
# rule's oracle gap from growth_table -- verified to four decimals.
\end{verbatim}

\end{document}